\newcommand*\LyXZeroWidthSpace{\hspace{0pt}}
\providecommand{\tabularnewline}{\\}
\theoremstyle{plain}
\newtheorem{lem}{\protect\lemmaname}
\theoremstyle{plain}
\newtheorem{thm}{\protect\theoremname}
\renewcommand{\hat}{\widehat}
\newtheorem{assumption}{Assumption}
\providecommand{\theoremname}{Theorem}
\providecommand{\lemmaname}{Lemma}
\providecommand{\theoremname}{Theorem}
\begin{document}

\title{The Robust F-Statistic as a Test for Weak Instruments}
\author{Frank Windmeijer\thanks{Email address: frank.windmeijer@stats.ox.ac.uk}
\\
 {\small Department of Statistics and Nuffield College, University
of Oxford}\\
{\small New Road, Oxford OX1 1NF, UK}\\
}
\maketitle
\begin{abstract}
\noindent\baselineskip=15pt For the linear model with a single endogenous
variable, \citet{OleaPflueger2013} proposed the effective F-statistic
as a test for weak instruments in terms of the Nagar bias of the two-stage
least squares (2SLS) or limited information maximum likelihood (LIML)
estimator relative to a benchmark worst-case bias. We show that their
methodology for the 2SLS estimator applies to a class of linear generalized
method of moments (GMM) estimators with an associated class of generalized
effective F-statistics. The standard robust F-statistic is a member
of this class. The associated GMMf estimator, with the extension ``f''
for first-stage, has the weight matrix based on the first-stage residuals.
In the grouped-data IV designs of \citet{IAndrewsREStat2018} with
moderate and high levels of endogeneity and where the robust F-statistic
is large but the effective F-statistic is small, the GMMf estimator
is shown to behave much better in terms of bias than the 2SLS estimator.
\end{abstract}
\noindent{\small\textbf{Keywords:}}{\small{} Instrumental variables,
weak instruments, heteroskedasticity, robust F-statistic, GMM}{\small\par}

\noindent{\small\textbf{JEL Codes:}}{\small{} C12, C26}{\small\par}

\thispagestyle{empty}

\baselineskip=20pt

\pagebreak{}

\pagenumbering{arabic} \setcounter{page}{1}

\section{Introduction}

\label{Sec:Intro}It is commonplace to report the first-stage F-statistic
to test for weak instruments in linear models with a single endogenous
variable, estimated by two-stage least squares (2SLS). This follows
the work of \citet{StaigerStock1997} and \citet{StockYogo2005},
with the latter providing critical values for the first-stage non-robust
F-statistic for null hypotheses of weak instruments in terms of bias
of the 2SLS estimator relative to that of the OLS estimator and Wald-test
size distortions. These weak-instruments critical values for the non-robust
F-statistic are valid only under homoskedasticity (i.e.\ conditional
homoskedasticity, no serial correlation and no clustering) of both
the first-stage and structural errors, and do not apply to the robust
F-statistic in general designs, see \citet{BunHaan2010}, \citet{OleaPflueger2013}
and \citet{IAndrewsREStat2018}. In particular, \citet{IAndrewsREStat2018}
found for some cross-sectional heteroskedastic designs that the standard
2SLS confidence intervals had large coverage distortions even for
very large values of the robust F-statistic. For example, he found
for a high endogeneity design that ``the 2SLS confidence set has
a 15\% coverage distortion even when the mean of the first-stage robust
F-statistic is 100,000'', \citet[Supplementary Appendix, p 11]{IAndrewsREStat2018}.

For general heteroskedasticity, which we mean to cover the cases of
conditional heteroskedasticity, serial correlation and clustering,
\citet{OleaPflueger2013} proposed the first-stage effective F-statistic
and derived critical values for the null of weak instruments in terms
of the Nagar bias of the 2SLS or LIML estimator, relative to an estimator-specific
benchmark worst-case bias. We focus here on their results for the
2SLS estimator. As shown in Section \ref{subsec:MC}, the effective
F-statistics in the designs of \citet{IAndrewsREStat2018} do not
reject the null of weak instruments. In their review paper \citet*[p 729]{AndrewsStockSun2019}
recommend ``that researchers judge instrument strength based on the
effective F-statistic of \citet{OleaPflueger2013}''.

The effective F-statistic is specific to the Nagar bias of the 2SLS
or LIML estimator and the main contribution of this paper is that
the 2SLS related methods of \citet{OleaPflueger2013} apply to a wider
class of linear generalized method of moments (GMM) estimators resulting
in a class of associated generalized effective F-statistics. The robust
F-statistic is a member of this class, and we call its associated
GMM estimator the GMMf estimator, with the extension ``f'' for first-stage.
This is because the weight matrix of the GMMf estimator is based on
the first-stage residuals, with $k_{z}$ times the robust F-statistic
being the denominator of the GMMf estimator, where $k_{z}$ is the
number of excluded instruments. This is similar to the relationship
of the non-robust F-statistic and the 2SLS estimator.

In practice, reported 2SLS estimation results with robust standard
errors are often accompanied by the robust first-stage F-statistic,
as most statistical packages automatically provide these together.
Whilst the robust F-statistic can be used as a test for underidentification,
it can not be used as a test for weak instruments related to the performance
of the 2SLS estimator. Hence, for the single-endogenous variable case,
robust 2SLS estimation results should be accompanied by the effective
F-statistic and its critical value, as per the advice of \citet{AndrewsStockSun2019}.
The critical value of the robust F-statistic in relation to the test
for weak instruments for the GMMf estimator could also be computed.
If the situation is such, like in the \citet{IAndrewsREStat2018}
example, that the effective F-statistic is small and indicates weak
instruments problems for the 2SLS estimator, but the robust F-statistic
is large, rejecting the null of weak instruments for the GMMf estimator,
then the latter could be preferred and reported.\footnote{An extension of the ``weakivtest'' command of \citet{weakivtest}
in Stata, \citet{Stata}, called ``gfweakivtest'' for calculating
these is available from https://github.com/stat0357/gfweakivtest.} This is illustrated in Section \ref{subsec:MC}, where we replicate
the Monte Carlo analysis of \citet{IAndrewsREStat2018}. The design
is the same as a grouped-data one, see \citet{AngristJoE1991} and
the discussion in \citet{AngristPischke2009}, where the instruments
are mutually exclusive group membership indicators. In the two designs
considered, there is in each only one informative group, but the first-stage
heteroskedasticity is such that the 2SLS estimator does not utilize
this information well, whereas the GMMf estimator gives almost all
the weight to the informative groups.

Section \ref{sec:Model} introduces the single-endogenous variable
linear model specification, main assumptions, effective and robust
F-statistics and the GMMf estimator. Section \ref{sec:Fgeff} then
formulates the class of generalized F-statistics for the class of
linear GMM estimators and shows that the weak-instruments testing
methods developed by \citet{OleaPflueger2013} apply straightforwardly
to this class. The section then provides a summary of the \citet{OleaPflueger2013}
results. Section \ref{sec:RobF} shows how the general results and
specifications simplify for the robust-F statistic in relation to
the Nagar bias of the GMMf estimator. As the Nagar bias is relative
to a benchmark worse-case bias, which is estimator specific, we harmonize
in Section \ref{sec:Harmon} the benchmark bias and propose the use
of the worst-case OLS bias as the benchmark, which applies to the
class of GMM estimators considered.

The weak-instruments test procedures considered here only apply to
linear models with a single endogenous regressor. \citet{LewisMertens2022}
develop an extension of the \citet{OleaPflueger2013} method to the
multiple endogenous variable case for the 2SLS estimator, but they
do not consider such an extension for the wider class of GMM estimators.

\section{Model, Assumptions and F-Statistics}

\label{sec:Model}

We have a sample $\left\{ y_{i},x_{i},z_{i}\right\} _{i=1}^{n}$,
where $z_{i}$ is a $k_{z}$-vector of instrumental variables. We
are interested in the effect of $x$ on $y$ in a linear model specification,
where $x$ is endogenously determined. We consider the linear structural
and first-stage specifications
\begin{align}
y & =x\beta+u\label{eq:struc}\\
x & =Z\pi+v_{2},\label{eq:fstage}
\end{align}
where $y$, $x$, $u$ and $v_{2}$ are $n$-vectors and $Z$ an $n\times k_{z}$
matrix. Other exogenous explanatory variables, including the constant
have been partialled out. The reduced-form specification for $y$
is then given by
\begin{equation}
y=Z\pi\beta+v_{1}=Z\pi_{y}+v_{1},\label{eq:reduced}
\end{equation}
where $v_{1}=u+\beta v_{2}$, and $\pi_{y}=\pi\beta$.

Following \citet{OleaPflueger2013} (henceforth MOP), we make the
following assumptions.

\begin{assumption}\LyXZeroWidthSpace{}

\label{Ass:all}
\begin{enumerate}
\item \label{Ass:WIA} Weak-instruments asymptotics. The vector $\pi$ is
local to zero,
\[
\pi=\pi_{n}=c/\sqrt{n},
\]
where $c$ is a fixed vector $c\in\mathbb{R}^{k_{z}}$.
\item \label{Ass:lim} As $n\rightarrow\infty$,
\begin{align*}
\frac{1}{n}Z'Z & \overset{p}{\rightarrow}Q_{zz};\\
\\\frac{1}{n}\left[v_{1}\,v_{2}\right]'\left[v_{1}\,v_{2}\right] & \stackrel{p}{\rightarrow}\Sigma_{v};\\
\\\frac{1}{\sqrt{n}}\left(\begin{array}{c}
Z'v_{1}\\
Z'v_{2}
\end{array}\right) & \overset{d}{\rightarrow}\left(\begin{array}{c}
\psi_{1}\\
\psi_{2}
\end{array}\right)\sim N\left(0,W\right),
\end{align*}
 with $Q_{zz}$, $\Sigma_{v}$ and $W$ finite, positive definite
matrices, and
\begin{align*}
\Sigma_{v} & =\left[\begin{array}{cc}
\sigma_{1}^{2} & \sigma_{12}\\
\sigma_{12} & \sigma_{2}^{2}
\end{array}\right];\\
\\W & =\left[\begin{array}{cc}
W_{1} & W_{12}\\
W_{12}' & W_{2}
\end{array}\right].
\end{align*}
 
\item There exists a sequence of positive definite estimates $\left\{ \hat{W}_{n}\right\} $,
such that $\hat{W}_{n}\overset{p}{\rightarrow}W$ as $n\rightarrow\infty$.
\end{enumerate}
\end{assumption}

In the remainder, we drop the subscript $n$ from $\hat{W}_{n}$ for
ease of exposition, and, commensurate with the partitioning of $W$,
\[
\hat{W}=\left[\begin{array}{cc}
\hat{W}_{1} & \hat{W}_{12}\\
\hat{W}_{12}' & \hat{W}_{2}
\end{array}\right].
\]
The two-stage least squares (2SLS) estimator is given by
\[
\hat{\beta}_{2sls}=\frac{x'P_{Z}y}{x'P_{Z}x},
\]
where $P_{Z}=Z\left(Z'Z\right)^{-1}Z'$. The standard non-robust first-stage
F-statistic is
\[
\hat{F}=\frac{x'P_{Z}x}{k_{z}\hat{\sigma}_{v_{2}}^{2}},
\]
where $\hat{\sigma}_{v_{2}}^{2}=\hat{v}_{2}'\hat{v}_{2}/n$, $\hat{v}_{2}=\left(I_{n}-P_{Z}\right)x$,
where $I_{n}$ is the identity matrix of order $n$. Note that we
refrain throughout from finite sample degrees-of-freedom corrections
in the exposition. It follows that we can alternatively express the
2SLS estimator as
\begin{equation}
\hat{\beta}_{2sls}=\frac{\hat{\pi}'\left(V\hat{a}r\left(\hat{\pi}\right)\right)^{-1}\hat{\pi}_{y}}{\hat{\pi}'\left(V\hat{a}r\left(\hat{\pi}\right)\right)^{-1}\hat{\pi}}=\frac{\hat{\pi}'\left(V\hat{a}r\left(\hat{\pi}\right)\right)^{-1}\hat{\pi}_{y}}{k_{z}\hat{F}},\label{eq:2slsfr}
\end{equation}
where $\hat{\pi}$ and $\hat{\pi}_{y}$ are the OLS estimators of
$\pi$ and $\pi_{y}$ in the first-stage and reduced-form models (\ref{eq:fstage})
and (\ref{eq:reduced}), and $V\hat{a}r\left(\hat{\pi}\right)=\hat{\sigma}_{v_{2}}^{2}\left(Z'Z\right)^{-1}$,
the non-robust estimator of the variance of $\hat{\pi}$.

$\hat{F}$ can be used as a test for underidentification and as a
test for weak instruments in terms of the bias of the 2SLS estimator
relative to that of the OLS estimator of $\beta$, or the size distortion
of the Wald test for hypotheses on $\beta$, \citet{StockYogo2005}.
The Stock and Yogo critical values are valid only under conditional
homoskedasticity of both $u$ and $v_{2}$, i.e.\ $\mathbb{E}\left[u_{i}|z_{i}\right]=\sigma_{u}^{2}$
and $\mathbb{E}\left[v_{2i}|z_{i}\right]=\sigma_{v_{2}}^{2}$, or
$W=\Sigma_{v}\otimes Q_{zz}$, with $\Sigma_{v}=\mathbb{E}\left[v_{i}v_{i}'\right]$
and $v_{i}=\left(v_{1i},v_{2i}\right)'$.

The robust F-statistic is given by
\begin{equation}
\hat{F}_{r}=\frac{x'Z\hat{W}_{2}^{-1}Z'x}{nk_{z}}\label{eq:Fr}
\end{equation}
and is a standard test statistic for testing $H_{0}:\pi=0$ under
general forms of heteroskedasticity. But $\hat{F}_{r}$ cannot be
used as a test for weak instruments in relation to the behaviour of
the 2SLS estimator, see the discussion in \citet[pp 738-739]{AndrewsStockSun2019},
summarized at the end of this section. \citet{IAndrewsREStat2018}
showed in a grouped-data IV design that $\hat{F}_{r}$ could take
very large values, of the order of $100,000$, whereas the 2SLS estimator
was still poorly behaved in terms of bias and Wald test size.

MOP proposed the effective F-statistic
\begin{align}
\hat{F}_{\text{eff}} & =\frac{x'P_{Z}x}{\text{tr}\left(\left(\frac{1}{n}Z'Z\right)^{-1/2}\hat{W}_{2}\left(\frac{1}{n}Z'Z\right)^{-1/2}\right)}\nonumber \\
 & =\frac{x'P_{Z}x}{\text{tr}\left(\hat{W}_{2}\left(\frac{1}{n}Z'Z\right)^{-1}\right)},\label{eq:Feff}
\end{align}
and showed that this F-statistic can be used as a test for weak instruments
in relation to the \citet{Nagar1959} bias of the 2SLS and LIML estimators,
relative to a worst-case benchmark. \citet{AndrewsStockSun2019} advocate
the use of $\hat{F}_{\text{eff}}$ to gauge instrument strength for
the 2SLS estimator. Although this weak-instrument test is related
to the bias, the results presented in \citet[Section 3]{AndrewsStockSun2019}
for a sample of 106 specifications from papers published in the \textit{American
Economic Review} suggest that the effective F-statistic ``may convey
useful information about the instrument strength more broadly, since
we see that conventional asymptotic approximations appear reasonable
in specifications where the effective F-statistic exceeds 10.'',
\citet[p 739]{AndrewsStockSun2019}.

In the next section, we introduce a class of generalized effective
F-statistics, denoted $\hat{F}_{\text{geff}}$, associated with a
class of linear Generalized Method of Moments (GMM) estimators. We
show that the weak-instrument Nagar bias results of MOP, derived for
the effective F-statistic in relation to the 2SLS estimator, applies
to this general class. The robust F-statistic is a member of this
class and the associated GMM estimator, denoted GMMf, is defined as
\begin{equation}
\hat{\beta}_{gmmf}=\frac{x'Z\hat{W}_{2}^{-1}Z'y}{x'Z\hat{W}_{2}^{-1}Z'x},\label{eq:gmmf}
\end{equation}
where the extension ``f'' is for first stage, as the weight matrix
is based on the first-stage residuals. As we will show and explain
below, for the \citet{IAndrewsREStat2018} design with large values
for the robust F-statistic, but small values for the effective F-statistic,
this estimator is much better behaved in terms of bias and also inference
than the 2SLS estimator.

Like the expression of the 2SLS estimator in (\ref{eq:2slsfr}), we
can write the GMMf estimator as
\begin{equation}
\hat{\beta}_{gmmf}=\frac{\hat{\pi}'\left(V\hat{a}r_{r}\left(\hat{\pi}\right)\right)^{-1}\hat{\pi}_{y}}{\hat{\pi}'\left(V\hat{a}r_{r}\left(\hat{\pi}\right)\right)^{-1}\hat{\pi}}=\frac{\hat{\pi}'\left(V\hat{a}r_{r}\left(\hat{\pi}\right)\right)^{-1}\hat{\pi}_{y}}{k_{z}\hat{F}_{r}},\label{eq:gmmffr}
\end{equation}
where $V\hat{a}r_{r}\left(\hat{\pi}\right)=n\left(Z'Z\right)^{-1}\hat{W}_{2}\left(Z'Z\right)^{-1}$
is the robust estimator of the variance of $\hat{\pi}$.

The discussion in \citet[pp 738-739]{AndrewsStockSun2019} provides
the intuition of why $\hat{F}_{\text{eff}}$ is an appropriate statistic
for testing instrument strength when using 2SLS. As they argue, 2SLS
behaves badly when its denominator $x'P_{Z}x$ is close to zero. $\hat{F}_{\text{eff}}$
measures this object, as $x'P_{Z}x$ is its numerator, and gets the
standard errors right on average in the case of general heteroskedasticity,
unlike the non-robust $\hat{F}$. The robust F-statistic $\hat{F}_{r}$
measures a different object, $x'Z\hat{W}_{2}^{-1}Z'x$, and, asymptotically,
``while it has a noncentral chi-square distribution, its noncentrality
parameter does not correspond to the distribution of $\hat{\beta}_{2sls}$'',
\citet[p 739]{AndrewsStockSun2019}. But $x'Z\hat{W}_{2}^{-1}Z'x$
is the denominator of the GMMf estimator, and so $\hat{F}_{r}$ is
the appropriate statistic for testing instrument strength when using
$\hat{\beta}_{gmmf}$.

\section{The Generalized Effective F-Statistic as a Test for Weak Instruments}

\label{sec:Fgeff}

Consider the class of linear Generalized Methods of Moments (GMM)
estimators of $\beta$, given by
\begin{equation}
\hat{\beta}_{\Omega_{n}}=\frac{x'Z\Omega_{n}Z'y}{x'Z\Omega_{n}Z'x},\label{eq:bgmm}
\end{equation}
where $\Omega_{n}$ is a $k_{z}\times k_{z}$ possibly data dependent
weight matrix. It satisfies the following assumption,

\begin{assumption}

\label{Ass:Om}

Under the conditions stated in Assumption \ref{Ass:all}, as $n\rightarrow\infty$,
$\Omega_{n}\stackrel{p}{\rightarrow}\Omega$, with $\Omega$ a finite,
full rank matrix.

\end{assumption}

Assumption \ref{Ass:Om} precludes the standard two-step GMM estimator
with $\Omega_{n}=\Omega_{n}\left(\hat{\beta}_{1}\right)$, e.g.\
in the cross-sectional setting, $\Omega_{n}\left(\hat{\beta}_{1}\right)=\left(\frac{1}{n}\sum_{i=1}^{n}\hat{u}_{1i}^{2}z_{i}z_{i}'\right)^{-1}$,
where $\hat{u}_{1i}=y_{i}-x_{i}\hat{\beta}_{1}$ and $\hat{\beta}_{1}$
is an initial estimator, for example the 2SLS estimator. The initial
estimator is consistent under standard strong-instruments asymptotics,
but it converges to a non-degenerate random variable under weak-instruments
asymptotics, see (\ref{eq:betast}) below, and hence Assumption \ref{Ass:Om}
does not hold.

Let
\[
W_{\Omega}=\left[\begin{array}{cc}
W_{\Omega,1} & W_{\Omega,12}\\
W_{\Omega,12}^{\prime} & W_{\Omega,2}
\end{array}\right]=\left(I_{2}\otimes\Omega^{1/2}\right)W\left(I_{2}\otimes\Omega^{1/2}\right),
\]
and
\begin{equation}
\hat{W}_{\Omega_{n}}=\left(I_{2}\otimes\Omega_{n}^{1/2}\right)\hat{W}\left(I_{2}\otimes\Omega_{n}^{1/2}\right).\label{eq:WhatOmn}
\end{equation}
Then consider the class of generalized effective F-statistics, given
by
\begin{equation}
\hat{F}_{\text{geff}}\left(\Omega_{n}\right)=\frac{x'Z\Omega_{n}Z'x}{n\text{tr}\left(\hat{W}_{\Omega_{n},2}\right)}=\frac{x'Z\Omega_{n}Z'x}{n\text{tr}\left(\Omega_{n}^{1/2}\hat{W}_{2}\Omega_{n}^{1/2}\right)}=\frac{x'Z\Omega_{n}Z'x}{n\text{tr}\left(\hat{W}_{2}\Omega_{n}\right)}.\label{eq:geffF}
\end{equation}
For the 2SLS estimator we have $\Omega_{n}=\left(\frac{1}{n}Z'Z\right)^{-1}$
and
\[
\hat{F}_{\text{geff}}\left(\left(Z'Z/n\right)^{-1}\right)=\frac{x'P_{Z}x}{\text{tr}\left(\hat{W}_{2}\left(Z'Z/n\right)^{-1}\right)}=\hat{F}_{\text{eff}}.
\]
For the GMMf estimator as defined in (\ref{eq:gmmf}), we have $\Omega_{n}=\hat{W}_{2}^{-1}$.
Therefore
\[
\hat{F}_{\text{geff}}\left(\hat{W}_{2}^{-1}\right)=\frac{x'Z\hat{W}_{2}^{-1}Z'x}{nk_{z}}=\hat{F}_{r}.
\]

This remainder of this section together with the proofs in the Appendix
draw heavily on \citet{OleaPflueger2013}. We show that the MOP weak-instruments
testing methodology for 2SLS applies to the class of generalized F-statistics
in relation to the Nagar bias of the linear GMM estimators. We do
this by restating parts of their Lemma 1 and Theorem 1, \citet[p 362]{OleaPflueger2013},
that directly apply to our GMM setting.
\begin{lem}
\label{Lem:Distr} Under Assumptions \ref{Ass:all} and \ref{Ass:Om},
the following limits hold jointly as $n\rightarrow\infty$.
\begin{align}
\hat{\beta}_{\Omega_{n}}-\beta\overset{d}{\rightarrow}\beta_{\Omega}^{*} & =\left(\gamma_{\Omega,2}^{\prime}\gamma_{\Omega,2}\right)^{-1}\gamma_{\Omega,2}^{\prime}\left(\gamma_{\Omega,1}-\beta\gamma_{\Omega,2}\right)\label{eq:betast}\\
\hat{F}_{\text{geff}}\left(\Omega_{n}\right)\overset{d}{\rightarrow}F_{\text{geff}}^{*}\left(\Omega\right) & =\gamma_{\Omega,2}^{\prime}\gamma_{\Omega,2}/\text{tr}\left(W_{\Omega,2}\right)=\gamma_{\Omega,2}^{\prime}\gamma_{\Omega,2}/\text{tr}\left(W_{2}\Omega\right),\label{eq:Feffst}
\end{align}
where
\[
\left(\begin{array}{c}
\gamma_{\Omega,1}\\
\gamma_{\Omega,2}
\end{array}\right)\sim N\left(\left(\begin{array}{c}
c_{\Omega}\beta\\
c_{\Omega}
\end{array}\right),W_{\Omega}\right),
\]
with $c_{\Omega}=\Omega^{1/2}Q_{zz}c$.
\end{lem}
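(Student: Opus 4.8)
The plan is to reduce both objects to continuous functions of the two scaled sample moments $g_x = n^{-1/2}Z'x$ and $g_y = n^{-1/2}Z'y$, obtain their joint limiting law from Assumption \ref{Ass:all}, and conclude via the continuous mapping theorem and Slutsky's lemma. For the estimator, note that $y - x\beta = Z\pi\beta + v_1 - (Z\pi + v_2)\beta = v_1 - \beta v_2 = u$, so that
\[
\hat{\beta}_{\Omega_n} - \beta = \frac{x'Z\Omega_n Z'(y - x\beta)}{x'Z\Omega_n Z'x} = \frac{g_x'\Omega_n(g_y - \beta g_x)}{g_x'\Omega_n g_x},
\]
using $x'Z\Omega_n Z'x = n\,g_x'\Omega_n g_x$ and $x'Z\Omega_n Z'y = n\,g_x'\Omega_n g_y$. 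For the F-statistic, $\hat{F}_{\mathrm{geff}}(\Omega_n) = g_x'\Omega_n g_x / \mathrm{tr}(\hat{W}_2\Omega_n)$, where $\mathrm{tr}(\hat{W}_{\Omega_n,2}) = \mathrm{tr}(\hat{W}_2\Omega_n)$ follows from the definition (\ref{eq:WhatOmn}) and cyclicity of the trace.

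Next I would find the joint limit of $(g_y, g_x)$. Under Assumption \ref{Ass:WIA}, $Z'x = n^{-1/2}Z'Z\,c + Z'v_2$ and $Z'y = n^{-1/2}Z'Z\,c\beta + Z'v_1$, hence $g_x = (n^{-1}Z'Z)c + n^{-1/2}Z'v_2$ and $g_y = (n^{-1}Z'Z)c\beta + n^{-1/2}Z'v_1$. By Assumption \ref{Ass:lim} and Slutsky's lemma, $(g_y, g_x) \overset{d}{\rightarrow} (\xi_1, \xi_2)$ jointly, where $(\xi_1, \xi_2) = (Q_{zz}c\beta + \psi_1,\ Q_{zz}c + \psi_2)$ is Gaussian with mean $(Q_{zz}c\beta, Q_{zz}c)$ and covariance $W$. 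Putting $\gamma_{\Omega,j} = \Omega^{1/2}\xi_j$, $j = 1, 2$, the pair $(\gamma_{\Omega,1}, \gamma_{\Omega,2})$ is Gaussian with mean $(\Omega^{1/2}Q_{zz}c\,\beta,\ \Omega^{1/2}Q_{zz}c) = (c_\Omega\beta, c_\Omega)$ and covariance $(I_2\otimes\Omega^{1/2})W(I_2\otimes\Omega^{1/2}) = W_\Omega$, as claimed.

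To pass to the limit, observe that $\Omega_n \overset{p}{\rightarrow} \Omega$ and $\hat{W} \overset{p}{\rightarrow} W$ are convergences to constants, so by Slutsky $(g_x, g_y, \Omega_n, \hat{W}) \overset{d}{\rightarrow} (\xi_2, \xi_1, \Omega, W)$ jointly, and the map $(a, b, \Omega, W) \mapsto \big( (a'\Omega a)^{-1}a'\Omega(b - \beta a),\ a'\Omega a / \mathrm{tr}(W\Omega) \big)$ is continuous wherever $a'\Omega a \neq 0$. Since $\gamma_{\Omega,2} = \Omega^{1/2}\xi_2$ is non-degenerate Gaussian (covariance $W_{\Omega,2} = \Omega^{1/2}W_2\Omega^{1/2}$ positive definite), $\gamma_{\Omega,2}'\gamma_{\Omega,2} > 0$ almost surely, so the continuous mapping theorem gives, jointly,
\[
\hat{\beta}_{\Omega_n} - \beta \overset{d}{\rightarrow} (\gamma_{\Omega,2}'\gamma_{\Omega,2})^{-1}\gamma_{\Omega,2}'(\gamma_{\Omega,1} - \beta\gamma_{\Omega,2}), \qquad \hat{F}_{\mathrm{geff}}(\Omega_n) \overset{d}{\rightarrow} \gamma_{\Omega,2}'\gamma_{\Omega,2} / \mathrm{tr}(W_2\Omega),
\]
which are (\ref{eq:betast}) and (\ref{eq:Feffst}) once one notes $\mathrm{tr}(W_2\Omega) = \mathrm{tr}(\Omega^{1/2}W_2\Omega^{1/2}) = \mathrm{tr}(W_{\Omega,2})$.

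The step requiring care is this last one: the statement asserts \emph{joint}, not merely marginal, convergence of the estimator and the F-statistic, and this is obtained here only because both are written as functions of the single random vector $(g_x, g_y)$ together with weight matrices that converge in probability to constants — which is exactly why Assumption \ref{Ass:Om} (a constant probability limit $\Omega$) is needed, and why the two-step GMM weight matrix, whose probability limit is random under weak-instruments asymptotics, must be excluded. The remaining point is that the ratio defining $\hat{\beta}_{\Omega_n}$ be an almost surely continuous functional of the limit, which reduces to $\gamma_{\Omega,2}'\gamma_{\Omega,2} > 0$ a.s., i.e.\ to the positive definiteness of $W$ (hence of $W_\Omega$) assumed in Assumption \ref{Ass:lim}.
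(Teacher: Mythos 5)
Your proof is correct and follows essentially the same route as the paper's: decompose the scaled moments $n^{-1/2}Z'x$ and $n^{-1/2}Z'y$ under the local-to-zero parametrization, invoke the joint CLT of Assumption \ref{Ass:all}, and apply Slutsky and the continuous mapping theorem, with the only cosmetic difference being that you absorb $\Omega^{1/2}$ into the limit rather than carrying $\Omega_n^{1/2}$ through the prelimit expressions. Your explicit remark that $\gamma_{\Omega,2}'\gamma_{\Omega,2}>0$ almost surely (so the ratio map is continuous at the limit) is a point the paper leaves implicit.
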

\begin{proof}
See Appendix \ref{subsec:Proof-of-Lemma 1}.
\end{proof}

\subsection{Nagar Bias Approximation}

As $v_{1}=u+\beta v_{2}$, it follows from Assumption \ref{Ass:all}
that, as $n\rightarrow\infty$, 
\[
\frac{1}{\sqrt{n}}\left(\begin{array}{c}
\Omega_{n}^{1/2}Z'u\\
\Omega_{n}^{1/2}Z'v_{2}
\end{array}\right)\overset{d}{\rightarrow}N\left(0,S\left(\beta,W_{\Omega}\right)\right),
\]
with
\begin{equation}
S\left(\beta,W_{\Omega}\right)=\left[\begin{array}{cc}
S_{1}\left(\beta,W_{\Omega}\right) & S_{12}\left(\beta,W_{\Omega}\right)\\
S_{12}\left(\beta,W_{\Omega}\right)' & W_{\Omega,2}
\end{array}\right],\label{eq:SbW}
\end{equation}
where
\begin{align*}
S_{1}\left(\beta,W_{\Omega}\right) & =W_{\Omega,1}-\beta\left(W_{\Omega,12}+W_{\Omega,12}^{\prime}\right)+\beta^{2}W_{\Omega,2}\\
S_{12}\left(\beta,W_{\Omega}\right) & =W_{\Omega,12}-\beta W_{\Omega,2}.
\end{align*}

We can now state Theorem 1 of \citet[p 262]{OleaPflueger2013} on
the Nagar bias approximation for our general class of GMM estimators.
\begin{thm}
\label{Theor:Nagar}Nagar Bias Approximation. Let $c_{\Omega,0}:=c_{\Omega}/\Vert c_{\Omega}\Vert$,
where $\Vert c_{\Omega}\Vert=\sqrt{c_{\Omega}^{\prime}c_{\Omega}}$,
and let the concentration parameter $\mu_{\Omega}^{2}:=\Vert c_{\Omega}\Vert^{2}/\text{tr}\left(W_{\Omega,2}\right)$.
Define the benchmark bias as
\begin{equation}
\text{BM}\left(\beta,W_{\Omega}\right)\coloneqq\sqrt{\frac{\text{tr}\left(S_{1}\left(\beta,W_{\Omega}\right)\right)}{\text{tr}\left(W_{\Omega,2}\right)}}.\label{eq:BM}
\end{equation}
 The Taylor series expansion of $\beta_{\Omega}^{*}$ around $\mu_{\Omega}^{-1}=0$
results in the \citet{Nagar1959} bias approximation
\begin{equation}
\mathbb{E}\left[\beta_{\Omega}^{*}\right]\approx N\left(\beta,c_{\Omega},W_{\Omega}\right)=\frac{n\left(\beta,c_{\Omega,0},W_{\Omega}\right)}{\mu_{\Omega}^{2}},\label{eq:NagBias}
\end{equation}
with
\begin{equation}
n\left(\beta,c_{\Omega,0},W_{\Omega}\right)=\frac{\text{tr}\left(S_{12}\left(\beta,W_{\Omega}\right)\right)-2c_{\Omega,0}^{\prime}S_{12}\left(\beta,W_{\Omega}\right)c_{\Omega,0}}{\text{tr}\left(W_{\Omega,2}\right)}.\label{eq:biasn}
\end{equation}
Further,
\begin{equation}
B\left(W_{\Omega}\right)\coloneqq\sup_{\beta\in\mathbb{R},c_{\Omega,0}\in\mathcal{S}^{k_{z}-1}}\left(\frac{\left|n\left(\beta,c_{\Omega,0},W_{\Omega}\right)\right|}{\text{BM}\left(\beta,W_{\Omega}\right)}\right)\leq1\label{eq:BWst}
\end{equation}
where $\mathcal{S}^{k_{z}-1}$ is the $k_{z}$-dimensional unit sphere.
\end{thm}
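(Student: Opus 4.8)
The plan is to carry the proof of MOP's Theorem 1 over to the GMM setting essentially verbatim, the only substantive change being that the role played there by $Z'u$, $Z'v_{2}$, $W$ and $Q_{zz}c$ is now played by $\Omega_{n}^{1/2}Z'u$, $\Omega_{n}^{1/2}Z'v_{2}$, $W_{\Omega}$ and $c_{\Omega}$. Lemma \ref{Lem:Distr} already supplies the limit representation $\beta_{\Omega}^{*}=(\gamma_{\Omega,2}'\gamma_{\Omega,2})^{-1}\gamma_{\Omega,2}'(\gamma_{\Omega,1}-\beta\gamma_{\Omega,2})$. Writing $\xi_{1}:=\gamma_{\Omega,1}-\beta\gamma_{\Omega,2}$ and $\xi_{2}:=\gamma_{\Omega,2}=c_{\Omega}+\zeta_{2}$, the display around (\ref{eq:SbW}) gives $\xi_{1}\sim N(0,S_{1}(\beta,W_{\Omega}))$ and $\zeta_{2}\sim N(0,W_{\Omega,2})$ with $\mathrm{Cov}(\xi_{1},\zeta_{2})=S_{12}(\beta,W_{\Omega})$; since $S(\beta,W_{\Omega})$ is a fixed-$\beta$ invertible linear image of $W_{\Omega}\succ0$ it is positive definite, so $S_{1}(\beta,W_{\Omega})\succ0$ and $W_{\Omega,2}\succ0$ and there is no degeneracy to worry about. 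Thus $\beta_{\Omega}^{*}=\xi_{2}'\xi_{1}/(\xi_{2}'\xi_{2})$, a ratio of jointly Gaussian vectors, and the Nagar argument is purely about this object.

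For the bias approximation I would factor $\Vert c_{\Omega}\Vert^{2}$ out of the denominator, $\xi_{2}'\xi_{2}=\Vert c_{\Omega}\Vert^{2}\big(1+2c_{\Omega,0}'\zeta_{2}/\Vert c_{\Omega}\Vert+\zeta_{2}'\zeta_{2}/\Vert c_{\Omega}\Vert^{2}\big)$, expand $(\xi_{2}'\xi_{2})^{-1}$ as a geometric series in $\mu_{\Omega}^{-1}\asymp\Vert c_{\Omega}\Vert^{-1}$ (using that $c_{\Omega,0}'\zeta_{2}$ and $\zeta_{2}'\zeta_{2}$ are $O_{p}(1)$), multiply by the numerator $\xi_{2}'\xi_{1}=\Vert c_{\Omega}\Vert\,c_{\Omega,0}'\xi_{1}+\zeta_{2}'\xi_{1}$, and collect terms through order $\mu_{\Omega}^{-2}$, obtaining $\beta_{\Omega}^{*}=\Vert c_{\Omega}\Vert^{-1}c_{\Omega,0}'\xi_{1}+\Vert c_{\Omega}\Vert^{-2}\big(\zeta_{2}'\xi_{1}-2(c_{\Omega,0}'\xi_{1})(c_{\Omega,0}'\zeta_{2})\big)+O_{p}(\mu_{\Omega}^{-3})$. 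Taking expectations term by term --- the Nagar prescription --- and using $\mathbb{E}[c_{\Omega,0}'\xi_{1}]=0$, $\mathbb{E}[\zeta_{2}'\xi_{1}]=\mathrm{tr}(S_{12}(\beta,W_{\Omega}))$ and $\mathbb{E}[(c_{\Omega,0}'\xi_{1})(c_{\Omega,0}'\zeta_{2})]=c_{\Omega,0}'S_{12}(\beta,W_{\Omega})c_{\Omega,0}$ (all from $\mathrm{Cov}(\xi_{1},\zeta_{2})=S_{12}(\beta,W_{\Omega})$), together with $\Vert c_{\Omega}\Vert^{2}=\mu_{\Omega}^{2}\,\mathrm{tr}(W_{\Omega,2})$, gives exactly (\ref{eq:NagBias})--(\ref{eq:biasn}); in particular the leading $O_{p}(\mu_{\Omega}^{-1})$ term is mean zero, which is why the approximate bias is $O(\mu_{\Omega}^{-2})$.

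For the bound (\ref{eq:BWst}) I would write $n(\beta,c_{\Omega,0},W_{\Omega})=\mathrm{tr}\big(S_{12}(\beta,W_{\Omega})M\big)/\mathrm{tr}(W_{\Omega,2})$ with $M:=I_{k_{z}}-2c_{\Omega,0}c_{\Omega,0}'$, the Householder reflection along $c_{\Omega,0}$, which is symmetric and orthogonal. Using $S(\beta,W_{\Omega})\succeq0$ with positive definite diagonal blocks, factor $S_{12}(\beta,W_{\Omega})=S_{1}(\beta,W_{\Omega})^{1/2}KW_{\Omega,2}^{1/2}$ with $K$ a contraction in operator norm. Then Cauchy--Schwarz in the Frobenius inner product, invariance of the Frobenius norm under the orthogonal factor $M$, and the submultiplicativity $\Vert KA\Vert_{F}\le\Vert K\Vert\,\Vert A\Vert_{F}$ give $|\mathrm{tr}(S_{12}M)|\le\Vert S_{1}^{1/2}\Vert_{F}\,\Vert KW_{\Omega,2}^{1/2}M\Vert_{F}\le\sqrt{\mathrm{tr}(S_{1})}\,\sqrt{\mathrm{tr}(W_{\Omega,2})}$, hence $|n(\beta,c_{\Omega,0},W_{\Omega})|\le\sqrt{\mathrm{tr}(S_{1}(\beta,W_{\Omega}))/\mathrm{tr}(W_{\Omega,2})}=\mathrm{BM}(\beta,W_{\Omega})$; dividing and taking the supremum over $\beta\in\mathbb{R}$ and $c_{\Omega,0}\in\mathcal{S}^{k_{z}-1}$ yields $B(W_{\Omega})\le1$.

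The step I expect to be delicate is the first one: $\beta_{\Omega}^{*}$ is a ratio of dependent Gaussians and need not possess a finite mean, so ``$\mathbb{E}[\beta_{\Omega}^{*}]\approx\cdots$'' must be interpreted as the expectation of the truncated Taylor polynomial, which requires fixing the precise sense of ``expanding around $\mu_{\Omega}^{-1}=0$'' (a sequence with $\Vert c_{\Omega}\Vert\to\infty$ and $c_{\Omega,0}$, $W_{\Omega}$ held fixed) and checking that the neglected remainder is genuinely of smaller order. These are precisely the points settled by MOP for 2SLS; since the algebra above is isomorphic to theirs after the substitution in the first paragraph, I would import their remainder bookkeeping and record only the notational change, leaving the matrix inequality for (\ref{eq:BWst}) as the short self-contained addition.
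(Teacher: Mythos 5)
Your proof of the Nagar approximation is, after unwinding notation, the same argument as the paper's: the paper standardizes $\xi_{1}=S_{1}^{1/2}\xi$ and $\zeta_{2}=W_{\Omega,2}^{1/2}\nu$, normalizes the ratio by $\Vert c_{\Omega}\Vert$, and invokes Rothenberg's (1984) equation (6.2) for the second-order term, which is exactly your $\Vert c_{\Omega}\Vert^{-2}\bigl(\zeta_{2}'\xi_{1}-2(c_{\Omega,0}'\xi_{1})(c_{\Omega,0}'\zeta_{2})\bigr)$ obtained by the explicit geometric-series expansion; the expectations you compute from $\mathrm{Cov}(\xi_{1},\zeta_{2})=S_{12}(\beta,W_{\Omega})$ reproduce (\ref{eq:biasn}) verbatim, and your caveat about interpreting $\mathbb{E}[\beta_{\Omega}^{*}]$ as the expectation of the truncated expansion is precisely what the Rothenberg citation is doing in the paper. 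The one place you genuinely depart from the paper is the bound (\ref{eq:BWst}): the paper simply defers to Appendix A.3 of \citet{OleaPflueger2013}, whereas you give a self-contained argument via the Householder reflection $M=I_{k_{z}}-2c_{\Omega,0}c_{\Omega,0}'$, the contraction factorization $S_{12}=S_{1}^{1/2}KW_{\Omega,2}^{1/2}$ (valid since $S(\beta,W_{\Omega})$ is a congruence of $W_{\Omega}\succ0$), and Cauchy--Schwarz in the Frobenius inner product together with orthogonal invariance of $\Vert\cdot\Vert_{F}$. That argument is correct and uniform in $(\beta,c_{\Omega,0})$, so the supremum bound follows; it buys a proof that does not require the reader to consult MOP, at the cost of a few lines of matrix algebra the paper chose to omit.
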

\begin{proof}
For the Nagar bias approximation, see Appendix \ref{subsec:Proof-of-Nagar Bias}.
For $B\left(W_{\Omega}\right)\leq1$, see \citet[Apendix A.3]{OleaPflueger2013}.
\end{proof}
MOP interpret the benchmark $\text{BM}\left(\beta,W_{\Omega}\right)$
as a ``worst-case'' bias. It is derived by approximating the expectation
of the ratio by the ratio of expectations, 
\begin{align*}
\mathbb{E}\left[\beta_{\Omega}^{*}\right] & \approx\frac{\mathbb{E}\left[\gamma_{\Omega,2}^{\prime}\left(\gamma_{\Omega,1}-\beta\gamma_{\Omega,2}\right)\right]}{\mathbb{E}\left[\gamma_{\Omega,2}^{\prime}\gamma_{\Omega,2}\right]}=\frac{\text{tr}\left(S_{12}\left(\beta,W_{\Omega}\right)\right)}{\text{tr}\left(W_{\Omega,2}\right)\left(1+\mu_{\Omega}^{2}\right)}\\
 & =\frac{1}{\left(1+\text{\ensuremath{\mu_{\Omega}^{2}}}\right)}\frac{\text{tr}\left(S_{12}\left(\beta,W_{\Omega}\right)\right)}{\sqrt{\text{tr}\left(W_{\Omega,2}\right)}\sqrt{\text{tr}\left(S_{1}\left(\beta,W_{\Omega}\right)\right)}}\sqrt{\frac{\text{tr}\left(S_{1}\left(\beta,W_{\Omega}\right)\right)}{\text{tr}\left(W_{\Omega,2}\right)}.}
\end{align*}
This expected (absolute) bias expression is maximized when the concentration
parameter $\mu_{\Omega}^{2}=0$ and when the the first-stage and structural
errors are perfectly correlated, and it follows that then $\mathbb{E}\left[\beta_{\Omega}^{*}\right]\leq\sqrt{\text{tr}\left(S_{1}\left(\beta,W_{\Omega}\right)\right)/\text{tr}\left(W_{\Omega,2}\right)}$,
see \citet[pp 362-363]{OleaPflueger2013}.

The results of Lemma \ref{Lem:Distr} and Theorem \ref{Theor:Nagar}
are those in MOP for the 2SLS estimator, with $\Omega_{n}=\left(\frac{1}{n}Z'Z\right)^{-1}$
and $\Omega=Q_{zz}^{-1}$. The Lemma and Theorem are replicated here
to show that their methodology applies directly to the larger class
of GMM estimators (\ref{eq:bgmm}), under Assumption \ref{Ass:Om}.
We can therefore also apply the MOP procedure for testing for weak
instruments directly, as we describe next.

\subsection{Null Hypothesis of Weak Instruments and Testing Procedure}

The null hypothesis of weak instruments is specified as in MOP as
\[
H_{0}:\mu_{\Omega}^{2}\in\mathcal{H}\left(W_{\Omega},\tau\right)\,\,\,\text{against}\,\,\,H_{1}:\mu_{\Omega}^{2}\ensuremath{\notin}\mathcal{H}\left(W_{\Omega},\tau\right),
\]
where
\[
\mathcal{H}\left(W_{\Omega},\tau\right)=\left\{ \mu_{\Omega}^{2}\in\mathbb{R}_{+}:\sup_{\beta\in\mathbb{R},c_{\Omega,0}\in\mathcal{S}^{k_{z}-1}}\left(\frac{\left|N\left(\beta,\mu_{\Omega}\sqrt{\text{tr}\left(W_{\Omega,2}\right)}c_{\Omega,0},W_{\Omega}\right)\right|}{\text{BM}\left(\beta,W_{\Omega}\right)}\right)>\tau\right\} ,
\]
or equivalently
\[
\mathcal{H}\left(W_{\Omega},\tau\right)=\left\{ \mu_{\Omega}^{2}\in\mathbb{R}_{+}:\mu_{\Omega}^{2}<\frac{B\left(W_{\Omega}\right)}{\tau}\right\} .
\]
Under the null hypothesis of weak instruments, the Nagar bias exceeds
a fraction $\tau$ of the benchmark for at least some value of the
structural parameter $\beta$ and some direction of the first-stage
parameters, $c_{\Omega,0}$. The parameter $\tau$ is a user specified
threshold, commonly set to $\tau=0.10$.

The generalization of the MOP test for weak instruments is then based
on $\hat{F}_{\text{geff}}\left(\Omega_{n}\right)$ which is asymptotically
distributed as $\gamma_{\Omega,2}^{\prime}\gamma_{\Omega,2}/\text{tr}\left(W_{\Omega,2}\right)$,
with $\gamma_{\Omega,2}\sim N\left(c_{\Omega},W_{\Omega,2}\right)$,
which has mean $1+\mu_{\Omega}^{2}$. It follows that we reject $H_{0}$
when $\hat{F}_{\text{geff}}\left(\Omega_{n}\right)$ is large. Denote
by $F_{c_{\Omega},W_{\Omega,2}}^{-1}\left(\alpha\right)$ the upper
$\alpha$ quantile of the distribution of $\gamma_{\Omega,2}^{\prime}\gamma_{\Omega,2}/\text{tr}\left(W_{\Omega,2}\right)$
and let
\[
cv\left(\alpha,W_{\Omega,2},d_{\Omega}\right):=\sup_{c_{\Omega}\in\mathbb{R}^{k_{z}}}\left\{ F_{c_{\Omega},W_{\Omega,2}}^{-1}\left(\alpha\right)1_{\left(\frac{c_{\Omega}^{\prime}c_{\Omega}}{\text{tr}\left(W_{\Omega,2}\right)}<d_{\Omega}\right)}\right\} ,
\]
where $1_{\left(A\right)}$ denotes the indicator function over a
set $A$. The null of weak instruments is then rejected if 
\[
\hat{F}_{\text{geff}}\left(\Omega_{n}\right)>cv\left(\alpha,\hat{W}_{\Omega_{n},2},B\left(\hat{W}_{\Omega_{n}}\right)/\tau\right),
\]
which is shown in Lemma 2 of \citet[p 363]{OleaPflueger2013} to be
pointwise asymptotically valid,
\[
\sup_{\mathcal{H}\left(W_{\Omega},\tau\right)}\lim_{n\rightarrow\infty}\mathbb{P}\left(\hat{F}_{\text{geff}}\left(\Omega_{n}\right)>cv\left(\alpha,\hat{W}_{\Omega_{n},2},B\left(\hat{W}_{\Omega_{n}}\right)/\tau\right)\right)\leq\alpha,
\]
and, provided that $B\left(\hat{W}_{\Omega_{n}}\right)$ is bounded
in probability,
\[
\lim_{\mu_{\Omega}^{2}\rightarrow\infty}\lim_{n\rightarrow\infty}\mathbb{P}\left(\hat{F}_{\text{geff}}\left(\Omega_{n}\right)>cv\left(\alpha,\hat{W}_{\Omega_{n},2},B\left(\hat{W}_{\Omega_{n}}\right)/\tau\right)\right)=1.
\]

After obtaining $B\left(\hat{W}_{\Omega_{n}}\right)$ by a numerical
routine, MOP show that the critical values can be obtained by Monte
Carlo methods or by the \citet{Patnaik1949} curve-fitting methodology.
The Patnaik critical value is obtained as the the upper $\alpha$
quantile of $\chi_{\hat{k}_{\text{geff}}\left(\Omega_{n}\right)}^{2}\left(d_{\Omega_{n},\tau}\hat{k}_{\text{geff}}\left(\Omega_{n}\right)\right)/\hat{k}_{\text{geff}}\left(\Omega_{n}\right)$
where $\chi_{\hat{k}_{\text{geff}}\left(\Omega_{n}\right)}^{2}\left(d_{\Omega_{n},\tau}\hat{k}_{\text{geff}}\left(\Omega_{n}\right)\right)$
denotes the noncentral $\chi^{2}$ distribution with $\hat{k}_{\text{geff}}$$\left(\Omega_{n}\right)$
degrees of freedom and noncentrality parameter $d_{\Omega_{n},\tau}\hat{k}_{\text{geff}}\left(\Omega_{n}\right)$,
with
\begin{align}
d_{\Omega_{n}\tau} & =B\left(\hat{W}_{\Omega_{n}}\right)/\tau;\label{eq:dOmn}\\
\hat{k}_{\text{geff}}\left(\Omega_{n}\right) & =\frac{\left[\text{tr}\left(\hat{W}_{\Omega_{n},2}\right)\right]^{2}\left(1+2d_{\Omega_{n},\tau}\right)}{\text{tr}\left(\hat{W}_{\Omega_{n},2}^{\prime}\hat{W}_{\Omega_{n},2}\right)+2d_{\Omega_{n},\tau}\text{tr}\left(\hat{W}_{\Omega_{n},2}\right)\lambda_{\text{max}}\left(\hat{W}_{\Omega_{n},2}\right)},\label{eq:keff}
\end{align}
and where $\lambda_{\text{max}}\left(\hat{W}_{\Omega_{n},2}\right)$
denotes the maximum eigenvalue of $\hat{W}_{\Omega_{n},2}$.

To summarize, and following MOP, the weak-instruments test procedure
related to the Nagar approximation of the bias of the GMM estimator
$\hat{\beta}_{\Omega_{n}}$as defined in (\ref{eq:bgmm}), under Assumptions
\ref{Ass:all} and \ref{Ass:Om} is as follows.
\begin{enumerate}
\item Compute the generalized effective F-statistic, 
\[
\hat{F}_{\text{geff}}\left(\Omega_{n}\right)=\frac{x'Z\Omega_{n}Z'x}{n\text{tr}\left(\hat{W}_{2}\Omega_{n}\right)}.
\]
\item \label{step:BW} Obtain
\[
B\left(\hat{W}_{\Omega_{n}}\right)=\sup_{\beta\in\mathbb{R},c_{\Omega,0}\in\mathcal{S}^{k_{z}-1}}\left(\frac{\left|n\left(\beta,c_{\Omega,0},\hat{W}_{\Omega_{n}}\right)\right|}{\text{BM}\left(\beta,\hat{W}_{\Omega_{n}}\right)}\right)
\]
by a numerical maximization routine, where $\hat{W}_{\Omega_{n}}$
is as defined in (\ref{eq:WhatOmn}), $n\left(\beta,c_{\Omega,0},\hat{W}_{\Omega_{n}}\right)$
and $\text{BM}\left(\beta,\hat{W}_{\Omega_{n}}\right)$ in (\ref{eq:biasn})
and (\ref{eq:BWst}) respectively, with the estimated $\hat{W}_{\Omega_{n}}$
replacing $W_{\Omega}$.
\item \label{step:Patnaik} \begin{sloppypar} Applying the \citet{Patnaik1949}
curve-fitting methodology, estimate the effective degrees of freedom
$\hat{k}_{\text{geff}}\left(\Omega_{n}\right)$ as given in (\ref{eq:keff})
and compute the critical value $cv\left(\alpha,\hat{W}_{\Omega_{n},2},d_{\Omega_{n},\tau}\right)$
for a user specified threshold value $\tau$ as the upper $\alpha$
quantile of $\chi_{\hat{k}_{\text{geff}}\left(\Omega_{n}\right)}^{2}\left(d_{\Omega_{n},\tau}\hat{k}_{\text{geff}}\left(\Omega_{n}\right)\right)/\hat{k}_{\text{geff}}\left(\Omega_{n}\right)$,
where $d_{\Omega_{n},\tau}$ is defined in (\ref{eq:dOmn}).\end{sloppypar} 
\item \label{step:rej} Reject the null of weak instruments, that the proportion
of the Nagar approximation of the bias of $\hat{\beta}_{\Omega_{n}}$
relative to the benchmark bias is larger than $\tau$, if $\hat{F}_{\text{geff}}\left(\Omega_{n}\right)>cv\left(\alpha,\hat{W}_{\Omega_{n},2},d_{\Omega_{n},\tau}\right)$. 
\end{enumerate}
As an alternative to step \ref{step:Patnaik}.\ one can use Monte
Carlo methods, see \citet[Section 5]{OleaPflueger2013}. As $B\left(W_{\Omega}\right)\leq1$,
MOP propose a simplified asymptotically valid but conservative test.
For this simplified test procedure, step \ref{step:BW}.\ is not
needed, instead replacing $B\left(\hat{W}_{\Omega_{n}}\right)$ by
$1$, and so $d_{\Omega_{n},\tau}$ by $1/\tau$, in steps \ref{step:Patnaik}.\
and \ref{step:rej}. Hence the simplified test rejects the null hypothesis
of weak instruments if
\[
\hat{F}_{\text{geff}}>cv\left(\alpha,\hat{W}_{\Omega_{n},2},1/\tau\right),
\]
with $cv\left(\alpha,\hat{W}_{\Omega_{n},2},1/\tau\right)\geq cv\left(\alpha,\hat{W}_{\Omega,2},d_{\Omega_{n},\tau}\right)$.

\section{The Robust F-Statistic as a Test for Weak Instruments }

\label{sec:RobF}

For the robust F-statistic in relation to the Nagar bias of the GMMf
estimator as defined in (\ref{eq:gmmf}), 
\[
\hat{\beta}_{gmmf}=\frac{x'Z\hat{W}_{2}^{-1}Z'y}{x'Z\hat{W}_{2}^{-1}Z'x},
\]
the above expressions apply, but simplify significantly. With $\Omega_{n}=\hat{W}_{2}^{-1}\stackrel{p}{\rightarrow}W_{2}^{-1}=\Omega$,
it follows that
\[
W_{\Omega}=\left[\begin{array}{cc}
W_{2}^{-1/2}W_{1}W_{2}^{-1/2} & W_{2}^{-1/2}W_{12}W_{2}^{-1/2}\\
W_{2}^{-1/2}W_{12}'W_{2}^{-1/2} & I_{k_{z}}
\end{array}\right].
\]
For the Nagar bias expression, we get
\[
N_{gmmf}\left(\beta,c_{\Omega},W_{\Omega}\right)=\frac{1}{\mu_{\Omega}^{2}}n_{gmmf}\left(\beta,c_{\Omega},W_{\Omega}\right),
\]
with $\mu_{\Omega}^{2}=c'Q_{zz}W_{2}^{-1}Q_{zz}c/k_{z}$ and
\[
n_{gmmf}\left(\beta,c_{\Omega,0},W_{\Omega}\right)=\frac{\text{tr}\left(S_{12}\left(\beta,W_{\Omega}\right)\right)-2c_{\Omega,0}^{\prime}S_{12}\left(\beta,W_{\Omega}\right)c_{\Omega,0}}{k_{z}},
\]
where
\[
S_{12}\left(\beta,W_{\Omega}\right)=W_{\Omega,12}-\beta I_{k_{z}}.
\]
The expression therefore simplifies to
\[
n_{gmmf}\left(\beta,c_{\Omega,0},W_{\Omega}\right)=\frac{1}{k_{z}}\left(\text{tr}\left(W_{\Omega,12}\right)-2c_{\Omega,0}^{\prime}W_{\Omega,12}c_{\Omega,0}-\left(k_{z}-2\right)\beta\right).
\]
The benchmark worst-case bias for the GMMf estimator is then given
by
\[
\text{BM}\left(\beta,W_{\Omega}\right)=\sqrt{\frac{\text{tr}\left(S_{1}\left(\beta,W_{\Omega}\right)\right)}{k_{z}}}
\]
where
\[
\text{tr}\left(S_{1}\left(\beta,W_{\Omega}\right)\right)=\text{tr}\left(W_{\Omega,1}\right)-2\beta\text{tr}\left(W_{\Omega,12}\right)+k_{z}\beta^{2}.
\]
Then
\begin{align*}
B_{gmmf}\left(W_{\Omega}\right) & =\sup_{\beta\in\mathbb{R},c_{\Omega,0}\in\mathcal{S}^{k_{z}-1}}\left(\frac{\left|n_{gmmf}\left(\beta,c_{\Omega,0},W_{\Omega}\right)\right|}{\text{BM}\left(\beta,W_{\Omega}\right)}\right)\\
 & =\sup_{\beta\in\mathbb{R},c_{\Omega,0}\in\mathcal{S}^{k_{z}-1}}\left(\frac{\left|\text{tr}\left(W_{\Omega,12}\right)-2c_{\Omega,0}^{\prime}W_{\Omega,12}c_{\Omega,0}-\left(k_{z}-2\right)\beta\right|}{\sqrt{k_{z}\left(\text{tr}\left(W_{\Omega,1}\right)-2\beta\text{tr}\left(W_{\Omega,12}\right)+k_{z}\beta^{2}\right)}}\right)
\end{align*}
As $\Omega=W_{2}^{-1}$ we have that $\gamma_{\Omega,2}\sim N\left(c_{\Omega},I_{k_{z}}\right)$
and so it follows that $\gamma_{\Omega,2}^{\prime}\gamma_{\Omega,2}\sim\chi_{k_{z}}^{2}\left(c_{\Omega}^{\prime}c_{\Omega}\right)=\chi_{k_{z}}^{2}\left(k_{z}\mu_{\Omega}^{2}\right)$.
Therefore, the null of weak instruments for the GMMf estimator, specified
as
\[
H_{0}:\mu_{\Omega}^{2}\in\mathcal{H}_{gmmf}\left(W_{\Omega},\tau\right),
\]
\[
\mathcal{H}_{gmmf}\left(W_{\Omega},\tau\right)=\left\{ \mu_{\Omega}^{2}\in\mathbb{R}_{+}:\mu_{\Omega}^{2}<\frac{B_{gmmf}\left(W_{\Omega}\right)}{\tau}\right\} ,
\]
is rejected if
\[
\hat{F}_{r}>cv\left(\alpha,k_{z},d_{\Omega_{n},\tau}\right),
\]
with $cv\left(\alpha,k_{z},d_{\Omega_{n},\tau}\right)$ the upper
$\alpha$ quantile of $\chi_{k_{z}}^{2}\left(k_{z}d_{\Omega_{n}}\right)/k_{z}$,
and where $d_{\Omega_{n},\tau}=B_{gmmf}\left(\hat{W}_{\Omega_{n}}\right)/\tau$
and $\hat{W}_{\Omega_{n}}=\left(I_{2}\otimes\hat{W}_{2}^{-1/2}\right)\hat{W}\left(I_{2}\otimes\hat{W}_{2}^{-1/2}\right)$.
Relative to the general results for the generalized effective F-statistic,
we see that for the GMMf estimator, $B_{gmmf}\left(\hat{W}_{\Omega_{n}}\right)$
is a simpler function to maximize with respect to $\beta$ and $c_{\Omega,0}$
than $B\left(\hat{W}_{\Omega_{n}}\right)$ from (\ref{eq:BWst}) when
$\Omega_{n}\neq\hat{W}_{2}^{-1}$. There is further no need for Monte
Carlo simulations or Patnaik's curve-fitting methodology to compute
the critical values, as $\hat{F}_{r}$ follows an asymptotic scaled
noncentral chi-square distribution, with the ``effective'' degrees
of freedom here equal to $k_{z}$.

To summarize the weak-instruments testing procedure in relation to
the approximate Nagar bias of $\hat{\beta}_{gmmf}$,
\begin{enumerate}
\item Compute the robust F-statistic
\[
\hat{F}_{r}=\frac{x'Z\hat{W}_{2}^{-1}Z'x}{nk_{z}}.
\]
\item Obtain
\[
B_{gmmf}\left(\hat{W}_{\Omega_{n}}\right)=\sup_{\beta\in\mathbb{R},c_{\Omega,0}\in\mathcal{S}^{k_{z}-1}}\left(\frac{\left|\text{tr}\left(\hat{W}_{\Omega_{n},12}\right)-2c_{\Omega,0}^{\prime}\hat{W}_{\Omega_{n},12}c_{\Omega,0}-\left(k_{z}-2\right)\beta\right|}{\sqrt{k_{z}\left(\text{tr}\left(\hat{W}_{\Omega_{n},1}\right)-2\beta\text{tr}\left(\hat{W}_{\Omega_{n},12}\right)+k_{z}\beta^{2}\right)}}\right)
\]
by a numerical maximization routine, where $\hat{W}_{\Omega_{n}}=\left(I_{2}\otimes\hat{W}_{2}^{-1/2}\right)\hat{W}\left(I_{2}\otimes\hat{W}_{2}^{-1/2}\right)$.
\item \begin{sloppypar} Reject the null of weak instruments if $\hat{F}_{r}>cv\left(\alpha,\hat{W}_{\Omega_{n},2},d_{\Omega_{n},\tau}\right)$,
where $d_{\Omega_{n},\tau}=B_{gmmf}\left(\hat{W}_{\Omega_{n}}\right)/\tau$
and where $cv\left(\alpha,\hat{W}_{\Omega_{n},2},d_{\Omega_{n},\tau}\right)$
is the upper $\alpha$ quantile of $\chi_{k_{z}}^{2}\left(k_{z}d_{\Omega_{n},\tau}\right)/k_{z}$.
\end{sloppypar}
\end{enumerate}
The simplified conservative test based on the fact that $B_{gmmf}\left(W_{\Omega}\right)\leq1$
is obtained using the critical value $cv\left(\alpha,k_{z},1/\tau\right)$,
which is simply the upper $\alpha$ quantile of $\chi_{k_{z}}^{2}\left(k_{z}/\tau\right)/k_{z}$.

We can illustrate divergence of $\hat{F}_{\text{eff}}$ and $\hat{F}_{r}$,
building on examples in \citet{OleaPflueger2013} and \citet{AndrewsStockSun2019}.
Let $k_{z}=2$, $Q_{zz}=I_{2}$ and $W_{2}=\left(\begin{array}{cc}
\xi{}_{1}^{2} & 0\\
0 & \xi_{2}^{2}
\end{array}\right)$. Then for 2SLS/$\hat{F}_{\text{eff}}$ we have the concentration
parameter $\mu_{Q_{zz}^{-1}}^{2}=\frac{c_{1}^{2}+c_{2}^{2}}{\xi_{1}^{2}+\xi_{2}^{2}}$,
and for GMMf/$\hat{F}_{r}$ we have $\mu_{W_{2}^{-1}}^{2}=\frac{1}{2}\left(\frac{c_{1}^{2}}{\xi_{1}^{2}}+\frac{c_{2}^{2}}{\xi_{2}^{2}}\right)$.
Then, for $c_{1}^{2}>0$ and $c_{2}^{2}>0$, if for example $\xi_{1}^{2}\rightarrow0$
and $\xi_{2}^{2}>c_{1}^{2}+c_{2}^{2}$ then $\mu_{Q_{zz}^{-1}}^{2}\rightarrow\frac{c_{1}^{2}+c_{2}^{2}}{\xi_{2}^{2}}<1$
and $\mu_{W_{2}^{-1}}^{2}\rightarrow\infty$. In this situation, there
is no weak-instruments problem for the GMMf estimator, but there is
for the 2SLS estimator. A design where such heteroskedasticity can
be generated is the grouped-data IV one, and is the one used in \citet{IAndrewsREStat2018}.
We consider this design in Section \ref{sec:Grouped-Data-IV-Model},
where we further show how the GMMf estimator utilizes the information
in such designs better than the 2SLS estimator.\footnote{\label{fn:concpar} Note that for this example it is the case that
if $\left(\xi_{1}^{2}>\xi_{2}^{2}\right)$ \textit{and} $\left(\frac{c_{1}^{2}}{\xi_{1}^{2}}>\frac{c_{2}^{2}}{\xi_{2}^{2}}\right)$
then $\mu_{Q_{zz}^{-1}}^{2}>\mu_{W_{2}^{-1}}^{2}$.}

\subsection{Efficiency}

\label{subsec:Efficiency}

Under standard strong-instruments asymptotics, the 2SLS estimator
is consistent and asymptotically efficient if $\mathbb{E}\left[u_{i}^{2}z_{i}z_{i}'\right]=\sigma_{u}^{2}Q_{zz}$,
and for the GMMf estimator this is the case if $\mathbb{E}\left[v_{2,i}^{2}z_{i}z_{i}'\right]=\text{\ensuremath{\delta}}\mathbb{E}\left[u_{i}^{2}z_{i}z_{i}'\right]$,
for some constant $\delta>0$. However, under the weak-instruments
asymptotics setting of Assumption \ref{Ass:all}, it follows from
the expression of the random variable $\beta_{\Omega}^{*}$ in (\ref{eq:betast})
that the GMM estimators are biased and inconsistent. Therefore, the
weak-instruments test is based on the relative Nagar bias, and GMM
estimators can then be ranked on their bias performance, not on their
relative efficiency. As the MOP benchmark bias is estimator specific,
we first propose to harmonize the bias by using the worst-case OLS
bias as a benchmark.

\section{Harmonizing the Benchmark Bias}

\label{sec:Harmon}

The benchmark bias $\text{BM}\left(\beta,W_{\Omega}\right)$ as defined
in (\ref{eq:BM}) is estimator/F-statistic specific and is derived
using an ad hoc approximation of $\mathbb{E}\left(\beta_{\Omega}^{*}\right)$.
This makes a comparison of weak-instrument test results between generalized
F-statistics for different GMM estimators difficult, as the critical
values of the test are based on the maximum of the Nagar bias approximation
relative to the estimator-specific benchmark. A solution is to harmonize
the benchmark bias, such that it is the same for each estimator/F-statistic.
For the homoskedastic case, the bias related weak-instruments critical
values of \citet{StockYogo2005} for the nonrobust F-statistic are
in terms of the relative bias, that of the 2SLS estimator relative
to that of the OLS estimator. We propose here to harmonize the benchmark
worst-case bias and to replace it with the worst-case bias of the
OLS estimator.

The probability limit of the worst-case weak-instrument OLS (absolute)
bias, under Assumption \ref{Ass:all} is given by
\begin{align*}
\hat{\beta}_{LS}-\beta & =\frac{x'u}{x'x}\\
 & =\frac{c'Z'u/\sqrt{n}+v_{2}'u}{c'Z'Zc/n+2c'Z'v_{2}/\sqrt{n}+v_{2}'v_{2}}\\
 & \stackrel{p}{\rightarrow}\frac{\sigma_{uv_{2}}\left(\beta,\Sigma_{v}\right)}{\sigma_{2}^{2}}=\rho_{uv_{2}}\frac{\sigma_{u}\left(\beta,\Sigma_{v}\right)}{\sigma_{2}}\\
 & \leq\frac{\sigma_{u}\left(\beta,\Sigma_{v}\right)}{\sigma_{2}}=\sqrt{\frac{\sigma_{1}^{2}-2\beta\sigma_{12}+\beta^{2}\sigma_{2}^{2}}{\sigma_{2}^{2}}}=\text{BM}_{LS}\left(\beta,\Sigma_{v}\right).
\end{align*}

Note that this benchmark bias is only a function of $\beta$ and $\Sigma_{v}$,
the marginal variance of $(v_{1}\,v_{2})'$, applies to all forms
of general heteroskedasticity and does not depend on homoskedasticity
assumptions. However, it can be seen to be a valid benchmark for all
GMM estimators defined in (\ref{eq:bgmm}) by considering the worst-case
benchmark bias $\text{BM}\left(\beta,W_{\Omega}\right)$ under the
maintained assumption of homoskedasticity. In that case we have $W=\Sigma_{v}\otimes Q_{zz}$
and thus $W_{\Omega}=\Sigma_{v}\otimes\Omega^{-1/2}Q_{zz}\Omega^{-1/2}$.
Then it follows that
\begin{align*}
\text{BM}\left(\beta,W_{\Omega}\right) & =\sqrt{\frac{\left(\sigma_{1}^{2}-2\beta\sigma_{12}+\beta^{2}\sigma_{2}^{2}\right)\text{tr}\left(Q_{zz}\Omega^{-1}\right)}{\sigma_{2}^{2}\text{tr}\left(Q_{zz}\Omega^{-1}\right)}}\\
 & =\sqrt{\frac{\sigma_{1}^{2}-2\beta\sigma_{12}+\beta^{2}\sigma_{2}^{2}}{\sigma_{2}^{2}}}=\text{BM}_{LS}\left(\beta,\Sigma_{v}\right).
\end{align*}

We have that
\begin{align*}
\lim_{\beta\rightarrow\pm\infty}\frac{\left|n\left(\beta,c_{\Omega,0},W_{\Omega}\right)\right|}{\text{BM}\left(\beta,W_{\Omega}\right)} & =\lim_{\beta\rightarrow\pm\infty}\frac{\left|n\left(\beta,c_{\Omega,0},W_{\Omega}\right)\right|}{\text{BM}_{LS}\left(\beta,\Sigma_{v}\right)}=1-\frac{2\lambda_{\text{min}}\left(W_{\Omega,2}\right)}{\text{tr}\left(W_{\Omega,2}\right)},
\end{align*}
where $\lambda_{\text{min}}\left(W_{\Omega,2}\right)$ is the minimum
eigenvalue of $W_{\Omega,2}$. It follows further from Assumptions
\ref{Ass:all} and \ref{Ass:Om} that 
\[
0<\frac{\text{BM}\left(\beta,W_{\Omega}\right)}{\text{BM}_{LS}\left(\beta,\Sigma_{v}\right)}<C
\]
$\forall\beta\in\mathbb{R}$, for some finite $C>1$. Note that the
positive definiteness, or full rank assumption of $\Sigma_{v}$ is
important, as otherwise $\text{BM}_{LS}\left(\beta,\Sigma_{v}\right)$
would be zero for some value of $\beta$. Therefore the case of $\rho_{12}^{2}=1$
is excluded. It then follows from Theorem \ref{Theor:Nagar} that,
\[
B_{LS}\left(W_{\Omega},\Sigma_{v}\right)\coloneqq\sup_{\beta\in\mathbb{R},c_{\Omega,0}\in\mathcal{S}^{k_{z}-1}}\left(\frac{\left|n\left(\beta,c_{\Omega,0},W_{\Omega}\right)\right|}{\text{BM}_{LS}\left(\beta,\Sigma_{v}\right)}\right)<\infty
\]
and the test procedures described in Section \ref{sec:Fgeff} apply,
replacing $B\left(W_{\Omega}\right)$ by $B_{LS}\left(W_{\Omega},\Sigma_{v}\right)$
from step \ref{step:BW} onwards in the summaries of the testing procedures.
The null of weak instruments is then rejected if
\[
\hat{F}_{\text{geff}}\left(\Omega_{n}\right)>cv^{LS}\left(\alpha,\hat{W}_{\Omega_{n},2},d_{\Omega_{n},\tau}^{LS}\right),
\]
where $d_{\Omega_{n},\tau}^{LS}=B_{LS}\left(\hat{W}_{\Omega_{n}},\hat{\Sigma}_{v}\right)/\tau$. 

The interpretation of the weak-instruments null hypothesis $H_{0}:\mu_{\Omega}^{2}\in\mathcal{H}\left(W_{\Omega},\Sigma_{v},\tau\right)$,
where
\[
\mathcal{H}\left(W_{\Omega},\Sigma_{v},\tau\right)=\left\{ \mu_{\Omega}^{2}\in\mathbb{R}_{+}:\sup_{\beta\in\mathbb{R},c_{\Omega,0}\in\mathcal{S}^{k_{z}-1}}\left(\frac{\left|N\left(\beta,\mu_{\Omega}\sqrt{\text{tr}\left(W_{\Omega,2}\right)}c_{\Omega,0},W_{\Omega}\right)\right|}{\text{BM}_{LS}\left(\beta,\Sigma_{v}\right)}\right)>\tau\right\} ,
\]
is then that the Nagar bias exceeds a fraction $\tau$ of the benchmark
maximum OLS bias for at least some value of the structural parameter
$\beta$ and some direction of the first-stage coefficients. Whilst
this is not the same as the asymptotic relative bias results of \citet{StockYogo2005},
it is more aligned with it. It makes the interpretation of the null
hypothesis and hence that of its rejection the same for different
GMM estimators and their associated generalized effective F-statistics
and is not based on an ad hoc approximation. In the remainder, this
benchmark is used. Note that the simplified conservative test procedure
does not apply here, as it is not the case that $B_{LS}\left(W_{\Omega},\Sigma_{v}\right)\leq1$
in general.

\section{Grouped-Data IV Model}

\label{sec:Grouped-Data-IV-Model}

We now consider the heteroskedastic model designs from \citet{IAndrewsREStat2018}.
In these designs, very large values of the robust F-statistic are
accompanied by a poor performance of the 2SLS estimator, where \citet{IAndrewsREStat2018}
focused on coverage distortions of confidence sets. We find that in
these designs the effective F-statistic is indeed small, indicating
a weak-instrument problem for 2SLS, but large values of the robust
F-statistic indicate there is not a weak-instrument problem for the
GMMf estimator.

Following the example of divergence between $\hat{F}_{\text{eff}}$
and $\hat{F}_{r}$ given at at the end of Section \ref{sec:RobF},
the design in \citet[Supplementary Appendix C.3]{IAndrewsREStat2018}
is that of a grouped-data IV setup,
\begin{eqnarray*}
y_{i} & = & x_{i}\beta+u_{i}\\
x_{i} & = & z_{i}^{\prime}\pi+v_{2,i},
\end{eqnarray*}
for $i=1,...,n$, where the $G$-vector $z_{i}\in\left\{ e_{1},...,e_{G}\right\} $,
with $e_{g}$ a $G$-vector with $g$th entry equal to $1$ and zeros
everywhere else, for $g=1,\ldots,G$.

The variance-covariance structure for the errors is modeled fully
flexibly by group, and specified as
\[
\left(\left(\begin{array}{c}
u_{i}\\
v_{2,i}
\end{array}\right)|z_{i}=e_{g}\right)\sim\left(0,\Sigma_{g}\right),
\]
\begin{equation}
\Sigma_{g}=\left[\begin{array}{cc}
\sigma_{u,g}^{2} & \sigma_{uv_{2},g}\\
\sigma_{uv_{2},g} & \sigma_{v_{2},g}^{2}
\end{array}\right].\label{eq:Sigmas}
\end{equation}
At the group level, we therefore have for group member $j$ in group
$g$
\begin{eqnarray}
y_{jg} & = & x_{jg}\beta+u_{jg}\label{modyjs}\\
x_{jg} & = & \pi_{g}+v_{2,jg}\label{modxjs}
\end{eqnarray}
\[
\left(\begin{array}{c}
u_{jg}\\
v_{2,jg}
\end{array}\right)\sim\left(0,\Sigma_{g}\right),
\]
for $j=1,...,n_{g}$ and $g=1,...,G$, with $n_{g}$ the number of
observations in group $g$, $\sum_{g=1}^{G}n_{g}=n$, see also \citet{BekkervdPloeg2005}.
We assume that $\lim_{n\rightarrow\infty}\frac{n_{g}}{n}=f_{g}$,
with $0<f_{g}<1$.

The OLS estimator of $\pi_{g}$ is given by $\widehat{\pi}_{g}=\overline{x}_{g}=\frac{1}{n_{g}}\sum_{j=1}^{n_{s}}x_{jg}$
and $Var\left(\widehat{\pi}_{g}\right)=\sigma_{v_{2},g}^{2}/n_{g}$.
The OLS residual is $\widehat{v}_{2,jg}=x_{jg}-\overline{x}_{g}$
and the estimator for the variance is given by $V\widehat{a}r\left(\widehat{\pi}_{g}\right)=\widehat{\sigma}_{v_{2},g}^{2}/n_{g}$,
where $\widehat{\sigma}_{v,g}^{2}=\frac{1}{n_{g}}\sum_{j=1}^{n_{g}}\widehat{v}_{2,jg}^{2}$.
Let $Z$ be the $n\times G$ matrix of instruments. For the vector
$\pi$ the OLS estimator is given by
\[
\widehat{\pi}=\left(Z^{\prime}Z\right)^{-1}Z^{\prime}x=\left(\overline{x}_{1},\overline{x}_{2},...,\overline{x}_{G}\right)^{\prime}.
\]
Let
\begin{eqnarray}
\hat{W}_{2} & = & \frac{1}{n}\sum_{i=1}^{n}\widehat{v}_{2,i}^{2}z_{i}z_{i}^{\prime}\label{omev}\\
 & = & \text{Diag}\left(n_{g}\widehat{\sigma}_{v_{2},g}^{2}\right)/n,\nonumber 
\end{eqnarray}
where $\text{Diag}\left(q_{g}\right)$ is a diagonal matrix with $g$th
diagonal element $q_{g}$. Then the robust estimator of $Var\left(\widehat{\pi}\right)$
is given by
\begin{eqnarray*}
V\widehat{a}r_{r}\left(\widehat{\pi}\right) & = & \left(Z^{\prime}Z\right)^{-1}n\widehat{W}_{2}\left(Z^{\prime}Z\right)^{-1}\\
 & = & \text{Diag}\left(\widehat{\sigma}_{v_{2},g}^{2}/n_{g}\right).
\end{eqnarray*}
The non-robust variance estimator is
\begin{eqnarray*}
V\widehat{a}r\left(\widehat{\pi}\right) & = & \left(\frac{1}{n}\sum_{i=1}^{n}\widehat{v}_{2,i}^{2}\right)\left(Z^{\prime}Z\right)^{-1}\\
 & = & \left(\sum_{g=1}^{G}\frac{n_{g}}{n}\widehat{\sigma}_{v_{2},g}^{2}\right)\text{Diag}\left(\frac{1}{n_{g}}\right).
\end{eqnarray*}

The group- (or instrument-) specific IV estimators for $\beta$ are
given by
\begin{equation}
\widehat{\beta}_{g}=\frac{z_{g}^{\prime}y}{z_{g}^{\prime}x}=\frac{\overline{y}_{g}}{\overline{x}_{g}},\label{betasind}
\end{equation}
with $\overline{y}_{g}=\frac{1}{n_{g}}\sum_{j=1}^{n_{g}}y_{jg}$,
and the 2SLS estimator for $\beta$ is
\begin{eqnarray*}
\widehat{\beta}_{2sls} & = & \left(x^{\prime}P_{Z}x\right)^{-1}x^{\prime}P_{Z}y\\
 & = & \frac{\sum_{g=1}^{G}n_{g}\overline{x}_{g}\overline{y}_{g}}{\sum_{g=1}^{G}n_{g}\overline{x}_{g}^{2}}\\
 & = & \frac{\sum_{g=1}^{G}n_{g}\overline{x}_{g}^{2}\left(\overline{y}_{g}/\overline{x}_{g}\right)}{\sum_{g=1}^{G}n_{g}\overline{x}_{g}^{2}}=\sum_{g=1}^{G}\hat{w}_{2sls,g}\widehat{\beta}_{g}\text{,}
\end{eqnarray*}
the standard result that $\widehat{\beta}_{2sls}$ is a linear combination
of the instrument specific IV estimators, (see e.g.\ \citealp{Windmeijer2019}).
The weights are given by
\begin{equation}
\hat{w}_{2sls,g}=\frac{n_{g}\overline{x}_{g}^{2}}{\sum_{s=1}^{G}n_{s}\overline{x}_{s}^{2}}\geq0\label{ws}
\end{equation}
and hence the 2SLS estimator is here a weighted average of the group
specific estimators.

For the group specific estimates, the first-stage F-statistics are
given by
\begin{equation}
\hat{F}_{g}=\frac{\widehat{\pi}_{g}^{2}}{V\widehat{a}r\left(\widehat{\pi}_{g}\right)}=\frac{n_{g}\overline{x}_{g}^{2}}{\widehat{\sigma}_{v_{2},g}^{2}}\label{Fps}
\end{equation}
for $g=1,...,G$. As the errors $\left(u_{jg},v_{2,jg}\right)$ are
iid within groups, the standard weak-instruments results of \citet{StaigerStock1997}
and \citet{StockYogo2005} apply to each group-specific IV estimator
$\widehat{\beta}_{g}$. As these are just-identified models, we can
relate the values of the F-statistics to Wald-test size distortions.

From (\ref{ws}) and (\ref{Fps}) it follows that the weights for
the 2SLS estimator are related to the individual F-statistics as follows
\begin{equation}
\hat{w}_{2sls,g}=\frac{n_{g}\overline{x}_{g}^{2}}{\sum_{s=1}^{G}n_{s}\overline{x}_{s}^{2}}=\frac{\widehat{\sigma}_{v_{2},g}^{2}\hat{F}_{g}}{\sum_{s=1}^{G}\widehat{\sigma}_{v_{2},s}^{2}\hat{F}_{s}}.\label{eq:w2sls}
\end{equation}
Under first-stage homoskedasticity, $\sigma_{v_{2},g}^{2}=\sigma_{v_{2},s}^{2}$,
for $g,s=1,\ldots,G$, then $\widehat{\sigma}_{v_{2},s}^{2}\approx\widehat{\sigma}_{v_{2},s}^{2}$
for all $g,s$, and hence $\hat{F}\approx\frac{1}{G}\sum_{g=1}^{G}\hat{F}_{g}$.
Then the weights are given by $\hat{w}_{2sls,g}\approx\frac{\hat{F}_{g}}{\sum_{s=1}^{G}\hat{F}_{\pi_{g}}}\approx\frac{\hat{F}_{g}}{G\hat{F}}$,
so we see that the groups with the larger individual F-statistics
get the larger weights in the 2SLS estimator under homoskedasticity.

This is not necessarily the case under heteroskedasticity. For equal
sized groups with approximately the same value of the signal $\hat{\pi}_{g}^{2}$,
$\hat{F}_{g}$ is larger for groups with, and because of, the smaller
values of $\widehat{\sigma}_{v_{2},g}^{2}$. The 2SLS weights ignore
this information and give in this example approximately equal weights
to groups with similar values of $\hat{\pi}_{g}^{2}$ . In practice
it could then be the case that a small variance, large $\hat{F}_{g}$
group could receive a small weight in the 2SLS estimator. As shown
in the Monte Carlo exercises below, this is exactly what happens in
the design of \citet{IAndrewsREStat2018}. There is one group with
a large individual F-statistic. However, this group has a very small
population variance $\sigma_{v_{2},g}^{2}$ resulting in a relatively
small weight in the 2SLS estimator and a poor performance of the estimator
in terms of  bias and size of the Wald-test.

The non-robust F-statistic for $\pi$ is given by
\begin{eqnarray*}
\hat{F} & = & \frac{1}{G}\widehat{\pi}^{\prime}\left(V\widehat{a}r\left(\widehat{\pi}\right)\right)^{-1}\widehat{\pi}\\
 & = & \frac{1}{G}\frac{\sum_{g=1}^{G}n_{g}\overline{x}_{g}^{2}}{\left(\sum_{g=1}^{G}\frac{n_{g}}{n}\widehat{\sigma}_{v_{2},g}^{2}\right)}=\frac{1}{G}\sum_{g=1}^{G}\frac{\widehat{\sigma}_{v_{2},g}^{2}}{\left(\sum_{s=1}^{G}\frac{n_{s}}{n}\widehat{\sigma}_{v_{2},s}^{2}\right)}\hat{F}_{g}.
\end{eqnarray*}
The effective F-statistic is given by
\[
\hat{F}_{\text{eff}}=\sum_{g=1}^{G}\frac{\widehat{\sigma}_{v_{2},g}^{2}}{\left(\sum_{s=1}^{G}\widehat{\sigma}_{v_{2},s}^{2}\right)}\hat{F}_{g},
\]
and so $\hat{F}_{\text{eff}}=\hat{F}$ if groups sizes are equal,
$n_{g}=n/G$ for $g=1,\ldots,G$. In the designs of \citet{IAndrewsREStat2018}
group sizes are equal in expectation. Both these F-statistics will
therefore correctly reflect weak-instruments problems for the 2SLS
estimator in these designs.

The robust first-stage F-statistic is given by
\begin{eqnarray*}
\hat{F}_{r} & = & \frac{1}{G}\widehat{\pi}^{\prime}\left(V\widehat{a}r_{r}\left(\widehat{\pi}\right)\right)^{-1}\widehat{\pi}\\
 & = & \frac{1}{G}\sum_{g=1}^{G}\frac{n_{g}\overline{x}_{g}^{2}}{\widehat{\sigma}_{v_{2},g}^{2}}=\frac{1}{G}\sum_{g=1}^{G}\hat{F}_{g}.
\end{eqnarray*}
It is therefore clear, that if $\hat{F}_{r}$ is large, then at least
one of the $\hat{F}_{g}$ is large. For the GMMf estimator we have
that 
\begin{eqnarray}
\widehat{\beta}_{gmmf} & = & \frac{\sum_{g=1}^{G}n_{g}\overline{x}_{g}\overline{y}_{g}/\widehat{\sigma}_{v_{2},g}^{2}}{\sum_{g=1}^{G}n_{g}\overline{x}_{g}^{2}/\widehat{\sigma}_{v_{2},g}^{2}}=\sum_{g=1}^{G}\frac{n_{g}\overline{x}_{g}^{2}/\widehat{\sigma}_{v_{2},g}^{2}}{\sum_{s=1}^{G}n_{s}\overline{x}_{s}^{2}/\widehat{\sigma}_{v_{2},s}^{2}}\widehat{\beta}_{g}\label{GMMv}\\
 & = & \sum_{g=1}^{G}\hat{w}_{gmmf,g}\widehat{\beta}_{g},\nonumber 
\end{eqnarray}
with
\[
\hat{w}_{gmmf,g}=\frac{\hat{F}_{g}}{\sum_{s=1}^{G}\hat{F}_{s}}=\frac{\hat{F}_{g}}{G\hat{F}_{r}},
\]
hence the groups with the larger F-statistics get the larger weights,
independent of the values of $\hat{\sigma}_{v_{2},s}^{2}$, mimicking
the 2SLS weights under homoskedasticity of the first-stage errors.

\subsection{Some Monte Carlo Results}

\label{subsec:MC}

We consider here the two heteroskedastic designs of \citet{IAndrewsREStat2018}
with $G=10$ groups, $\beta=0$ and one with moderate and one with
high endogeneity. Tables 9 and 12 in the Supplementary Appendix C.3
of \citet{IAndrewsREStat2018} present the values of the conditional
group-specific variance matrices $\Sigma_{g}$ as defined in (\ref{eq:Sigmas})
and the first-stage parameters, denoted $\pi_{0g}$, for $g=1,\ldots,10$.
The correlation between $u_{i}$ and $v_{2i}$ is $-0.59$ in the
moderate and we set it equal to $0.99$ in the high endogeneity case.
We multiply the first-stage parameters $\pi_{0}$ by $0.040$ and
$0.026$, such that the value of the robust $\hat{F}_{r}$ is just
over $80$ on average for $10,000$ replications and sample size $n=10,000$
in both designs. The group sizes are equal in expectation with $P\left(z_{i}=e_{g}\right)=0.1$
for all $g$. The first two rows in each panel of Table \ref{tab:grinf}
present the values of $\pi_{g}$ and $\sigma_{v_{2},g}^{2}$ for $g=1,\ldots,10$.

Tables \ref{tab:testmod} and \ref{tab:resmod} presents the weak-instruments
test results and estimation results. We use and present the critical
values based on the OLS bias benchmark, $\text{BM}_{LS}\left(\hat{W}_{\Omega_{n}},\hat{\Sigma}_{v}\right)$,
and set $\tau=0.10$. The means of the effective F-statistics are
small in both designs, and the $\hat{F}_{\text{eff}}$-based test
does not reject the null of weak instruments for the 2SLS estimator
in any of the replications. This is reflected in the bias of the 2SLS
estimator and the 2SLS-based Wald test for $H_{0}:\beta=0$ overrejects.
The means and standard deviations of the non-robust F-statistics are
virtually the same as those of the effective F-statistics, confirming
the results derived in Section \ref{sec:Grouped-Data-IV-Model}. The
means of the robust F-statistics are large, and the $\hat{F}_{r}$-based
test for weak instruments rejects the null of weak instruments in
all replications. The GMMf estimator is virtually unbiased and the
GMMf-based Wald test has good size properties. The means of the critical
values for $\hat{F}_{\text{eff}}$ are $17.09$ and $17.12$, whereas
those for $\hat{F}_{r}$ are $13.45$ in the moderate and $12.26$
in the high endogeneity design. In comparison, the 10\% relative bias
Stock and Yogo critical value for the homoskedastic case is here given
by $11.46$, see \citet{SkeelsWindmeijer2018}.

\begin{table}[t]
\caption{\label{tab:testmod} Weak-instruments test results}

\begin{centering}
\begin{tabular}{c|ccccccc}
\hline 
 & $\hat{F}$ & $\hat{F}_{\text{eff}}$ & $cv_{\text{eff}}^{LS}$ & RF & $\hat{F}_{r}$ & $cv_{r}^{LS}$ & RF\tabularnewline
\hline 
Mod Endog & 1.411 & 1.411 & 17.09 & 0 & 80.23 & 13.45 & 1\tabularnewline
 & {\small (0.738)} & {\small (0.739)} & {\small (0.110)} &  & {\small (7.025)} & {\small (0.290)} & \tabularnewline
\hline 
High Endog & 0.993 & 0.993 & 17.12 & 0 & 80.12 & 12.26 & 1\tabularnewline
 & {\small (0.599)} & {\small (0.599)} & {\small (0.113)} &  & {\small (7.081)} & {\small (0.010)} & \tabularnewline
\hline 
\end{tabular}
\par\end{centering}
\centering{}{\small Notes: Means and (st.dev.), of $10,000$ replications.
$n=10,000$, $\tau=0.1$, Rej.freq. (RF) at 5\% level.}{\small\par}
\end{table}

\begin{table}[t]
\caption{\label{tab:resmod} Estimation results}

\begin{centering}
\begin{tabular}{c|ccccc}
\hline 
 & $\widehat{\beta}_{ols}$ & $\widehat{\beta}_{2sls}$ & $\widehat{\beta}_{gmmf}$ & $Wald_{2sls}$ & $Wald_{gmmf}$\tabularnewline
\hline 
Mod Endog & -0.608 & -0.424 & -0.001 & 0.534 & 0.049\tabularnewline
 & {\small (0.011)} & {\small (0.257)} & {\small (0.563)} &  & \tabularnewline
\hline 
High Endog & 0.747 & 0.742 & 0.007 & 0.999 & 0.065\tabularnewline
 & {\small (0.001)} & {\small (0.057)} & {\small (0.029)} &  & \tabularnewline
\hline 
\end{tabular}
\par\end{centering}
\centering{}{\small Notes: Bias and (st.dev.). Rej.freq. of robust
Wald tests at 5\% level.}{\small\par}
\end{table}

The details as given in Table \ref{tab:grinf} below make clear what
is happening. It reports the population values of $\pi_{g}$, $\sigma_{v_{2},g}^{2}$,
$\mu_{n,g}^{2}=1000\pi_{g}^{2}/\sigma_{v_{2},g}^{2}$ and the mean
values of $\hat{F}_{g}$, $\hat{w}_{2sls,g}$ and $\hat{w}_{gmmf,g}=\hat{F}_{g}/\sum_{s=1}^{G}\hat{F}_{g}$.
For the moderate endogeneity design identification in the first group
is strong, with an average value of $\hat{F}_{1}=789.5$. Identification
in all other 9 groups is very weak, with the largest average value
for $\hat{F}_{5}=2.23$. The signal for group 1, $\pi_{1}^{2}$, is
somewhat larger than those for the other groups, but the population
value $\mu_{n,1}^{2}$ is large mainly due to the relatively very
small value of $\sigma_{v_{2},1}^{2}$. As detailed in (\ref{eq:w2sls}),
the 2SLS weights ignore the $\sigma_{v_{2},1}^{2}$ part of the information
in group 1 which leads to the low average value of $\hat{w}_{2sls,1}=0.127$.
This shows that the 2SLS estimator does not utilize the identification
strength of the first group well, with some larger weights given to
higher variance, but lower concentration-parameter groups.

Table \ref{tab:grinf} further shows that for the GMMf estimator almost
all weight is given to the first group, with the average of $\hat{w}_{gmmf,1}$
equal to $0.984$, resulting in the good behaviour of the GMMf estimator
in terms of bias and Wald test size. In this case the standard deviation
of the GMMf estimator is quite large relative to that of the 2SLS
estimator. This is driven by the value of $\sigma_{u,1}^{2}$, which
in this design is equal to $1.10$, much larger than $\sigma_{v_{2},1}^{2}$.
Reducing the value of $\sigma_{u,1}^{2}$ (and the value for $\sigma_{uv_{2},1}$
accordingly to keep the same correlation structure within group 1),
will reduce the standard deviation of the GMMf estimator.

The pattern of group information for the high endogeneity case is
similar to that of the moderate endogeneity case, with one informative
group, $g=10$, with an average value of $\hat{F}_{10}=792.2$. However,
the variance $\sigma_{v_{2},10}^{2}$ is now so small in relative
terms, that the 2SLS weight for group 10 has an average value of only
$\hat{w}_{2sls,10}=0.003$. The GMMf estimator corrects this, with
the average value of $\hat{w}_{gmmf,10}=0.989$. The standard deviation
of the GMMf estimates, $0.029$, is in this case smaller than that
of the 2SLS estimates, $0.057$.

\begin{table}[t]
\caption{\label{tab:grinf}Group information and estimator weights}

\begin{centering}
{\footnotesize{}%
\begin{tabular}{llrrrrrrrrrr}
\hline 
 & {\footnotesize$g$} & {\footnotesize 1} & {\footnotesize 2} & {\footnotesize 3} & {\footnotesize 4} & {\footnotesize 5} & {\footnotesize 6} & {\footnotesize 7} & {\footnotesize 8} & {\footnotesize 9} & {\footnotesize 10}\tabularnewline
\hline 
{\footnotesize ME} & {\footnotesize$\pi_{g}$} & \multicolumn{1}{r}{{\footnotesize 0.058}} & \multicolumn{1}{r}{{\footnotesize -0.023}} & \multicolumn{1}{r}{{\footnotesize 0.049}} & \multicolumn{1}{r}{{\footnotesize 0.015}} & \multicolumn{1}{r}{{\footnotesize 0.022}} & \multicolumn{1}{r}{{\footnotesize 0.008}} & \multicolumn{1}{r}{{\footnotesize -0.017}} & \multicolumn{1}{r}{{\footnotesize 0.011}} & \multicolumn{1}{r}{{\footnotesize -0.036}} & \multicolumn{1}{r}{{\footnotesize -0.040}}\tabularnewline
 & {\footnotesize$\sigma_{v_{2},g}^{2}$} & \multicolumn{1}{r}{{\footnotesize 0.004}} & \multicolumn{1}{r}{{\footnotesize 2.789}} & \multicolumn{1}{r}{{\footnotesize 4.264}} & \multicolumn{1}{r}{{\footnotesize 0.779}} & \multicolumn{1}{r}{{\footnotesize 0.395}} & \multicolumn{1}{r}{{\footnotesize 7.026}} & \multicolumn{1}{r}{{\footnotesize 1.226}} & \multicolumn{1}{r}{{\footnotesize 0.308}} & \multicolumn{1}{r}{{\footnotesize 1.709}} & \multicolumn{1}{r}{{\footnotesize 6.099}}\tabularnewline
 &  & \multicolumn{1}{r}{} & \multicolumn{1}{r}{} & \multicolumn{1}{r}{} & \multicolumn{1}{r}{} & \multicolumn{1}{r}{} & \multicolumn{1}{r}{} & \multicolumn{1}{r}{} & \multicolumn{1}{r}{} & \multicolumn{1}{r}{} & \multicolumn{1}{r}{}\tabularnewline
 & {\footnotesize$\mu_{n,g}^{2}$} & \multicolumn{1}{r}{{\footnotesize 785.7}} & \multicolumn{1}{r}{{\footnotesize 0.184}} & \multicolumn{1}{r}{{\footnotesize 0.556}} & \multicolumn{1}{r}{{\footnotesize 0.284}} & \multicolumn{1}{r}{{\footnotesize 1.190}} & \multicolumn{1}{r}{{\footnotesize 0.009}} & \multicolumn{1}{r}{{\footnotesize 0.236}} & \multicolumn{1}{r}{{\footnotesize 0.387}} & \multicolumn{1}{r}{{\footnotesize 0.770}} & \multicolumn{1}{r}{{\footnotesize 0.266}}\tabularnewline
 & {\footnotesize$\hat{F}_{g}$} & \multicolumn{1}{r}{{\footnotesize 789.5}} & \multicolumn{1}{r}{{\footnotesize 1.170}} & \multicolumn{1}{r}{{\footnotesize 1.564}} & \multicolumn{1}{r}{{\footnotesize 1.279}} & \multicolumn{1}{r}{{\footnotesize 2.225}} & \multicolumn{1}{r}{{\footnotesize 0.997}} & \multicolumn{1}{r}{{\footnotesize 1.203}} & \multicolumn{1}{r}{{\footnotesize 1.372}} & \multicolumn{1}{r}{{\footnotesize 1.798}} & \multicolumn{1}{r}{{\footnotesize 1.246}}\tabularnewline
 &  &  &  &  &  &  &  &  &  &  & \tabularnewline
 & {\footnotesize$\hat{w}_{2sls,g}$} & \multicolumn{1}{r}{{\footnotesize 0.126}} & \multicolumn{1}{r}{{\footnotesize 0.098}} & \multicolumn{1}{r}{{\footnotesize 0.178}} & \multicolumn{1}{r}{{\footnotesize 0.035}} & \multicolumn{1}{r}{{\footnotesize 0.031}} & \multicolumn{1}{r}{{\footnotesize 0.180}} & \multicolumn{1}{r}{{\footnotesize 0.049}} & \multicolumn{1}{r}{{\footnotesize 0.015}} & \multicolumn{1}{r}{{\footnotesize 0.096}} & \multicolumn{1}{r}{{\footnotesize 0.192}}\tabularnewline
 & {\footnotesize$\hat{w}_{gmmf,g}$} & {\footnotesize 0.984} & {\footnotesize 0.002} & {\footnotesize 0.002} & {\footnotesize 0.002} & {\footnotesize 0.003} & {\footnotesize 0.001} & {\footnotesize 0.002} & {\footnotesize 0.002} & {\footnotesize 0.002} & {\footnotesize 0.002}\tabularnewline
\hline 
{\footnotesize HE} & {\footnotesize$100\cdot\pi_{g}$} & {\footnotesize -0.021} & {\footnotesize 0.095} & {\footnotesize -0.484} & {\footnotesize -0.069} & {\footnotesize 0.159} & {\footnotesize -0.028} & {\footnotesize 0.101} & {\footnotesize -0.418} & {\footnotesize 0.450} & {\footnotesize -0.546}\tabularnewline
 & {\footnotesize$\sigma_{v_{2},g}^{2}$} & {\footnotesize 1.600} & {\footnotesize 0.478} & {\footnotesize 2.975} & {\footnotesize 1.142} & {\footnotesize 0.174} & {\footnotesize 0.145} & {\footnotesize 4.658} & {\footnotesize 1.963} & {\footnotesize 2.990} & {\footnotesize 0.38$\cdot a$}\tabularnewline
 &  &  &  &  &  &  &  &  &  &  & \tabularnewline
 & {\footnotesize$\mu_{n,g}^{2}$} & {\footnotesize 0.28$\cdot a$} & {\footnotesize 0.002} & {\footnotesize 0.008} & {\footnotesize 4.2$\cdot a$} & {\footnotesize 0.015} & {\footnotesize 5.6$\cdot a$} & {\footnotesize 2.2$\cdot a$} & {\footnotesize 0.009} & {\footnotesize 0.007} & {\footnotesize 789.9}\tabularnewline
 & {\footnotesize$\hat{F}_{g}$} & {\footnotesize 0.998} & {\footnotesize 1.017} & {\footnotesize 0.979} & {\footnotesize 1.010} & {\footnotesize 1.034} & {\footnotesize 0.984} & {\footnotesize 0.977} & {\footnotesize 1.031} & {\footnotesize 0.997} & {\footnotesize 792.2}\tabularnewline
 &  &  &  &  &  &  &  &  &  &  & \tabularnewline
 & {\footnotesize$\hat{w}_{2sls,g}$} & {\footnotesize 0.111} & {\footnotesize 0.040} & {\footnotesize 0.177} & {\footnotesize 0.085} & {\footnotesize 0.016} & {\footnotesize 0.013} & {\footnotesize 0.242} & {\footnotesize 0.134} & {\footnotesize 0.181} & {\footnotesize 0.003}\tabularnewline
 & {\footnotesize$\hat{w}_{gmmf,g}$} & \multicolumn{1}{r}{{\footnotesize 0.001}} & \multicolumn{1}{r}{{\footnotesize 0.001}} & \multicolumn{1}{r}{{\footnotesize 0.001}} & \multicolumn{1}{r}{{\footnotesize 0.001}} & \multicolumn{1}{r}{{\footnotesize 0.001}} & \multicolumn{1}{r}{{\footnotesize 0.001}} & \multicolumn{1}{r}{{\footnotesize 0.001}} & \multicolumn{1}{r}{{\footnotesize 0.001}} & \multicolumn{1}{r}{{\footnotesize 0.001}} & \multicolumn{1}{r}{{\footnotesize 0.989}}\tabularnewline
\hline 
\end{tabular}}{\footnotesize\par}
\par\end{centering}
\centering{}{\small Notes: }$\mu_{n,g}^{2}=1000\pi_{g}^{2}/\sigma_{v_{2},g}^{2}$;
$a=10^{-4}$
\end{table}

\begin{figure}[t]
\begin{centering}
\includegraphics[scale=0.55]{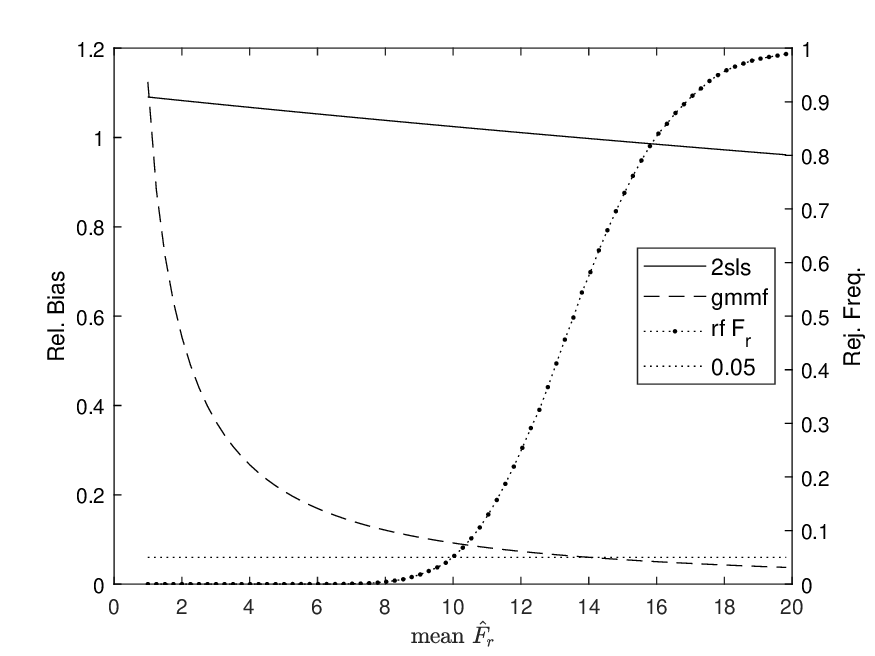}\includegraphics[scale=0.55]{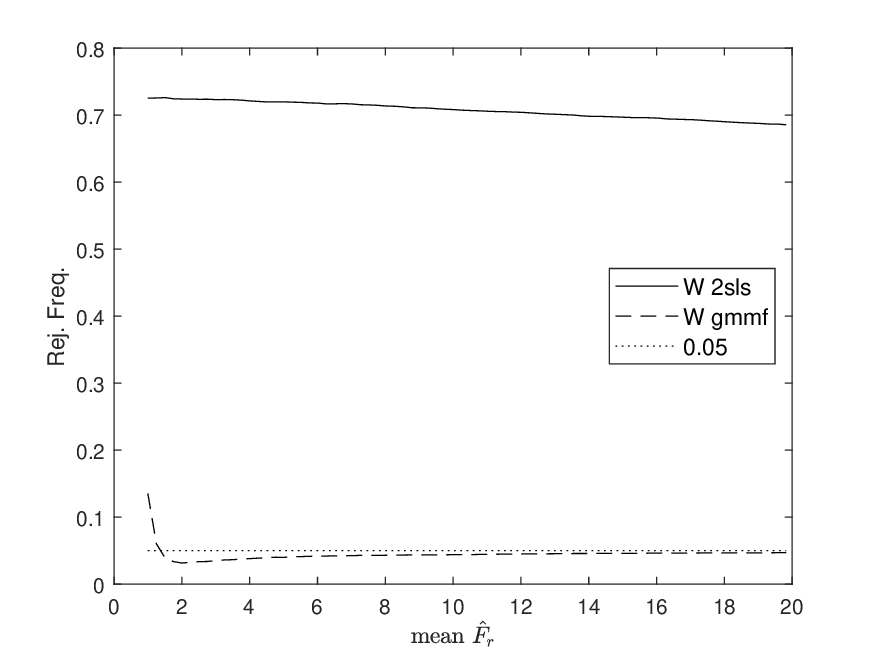}
\par\end{centering}
\begin{centering}
\includegraphics[scale=0.55]{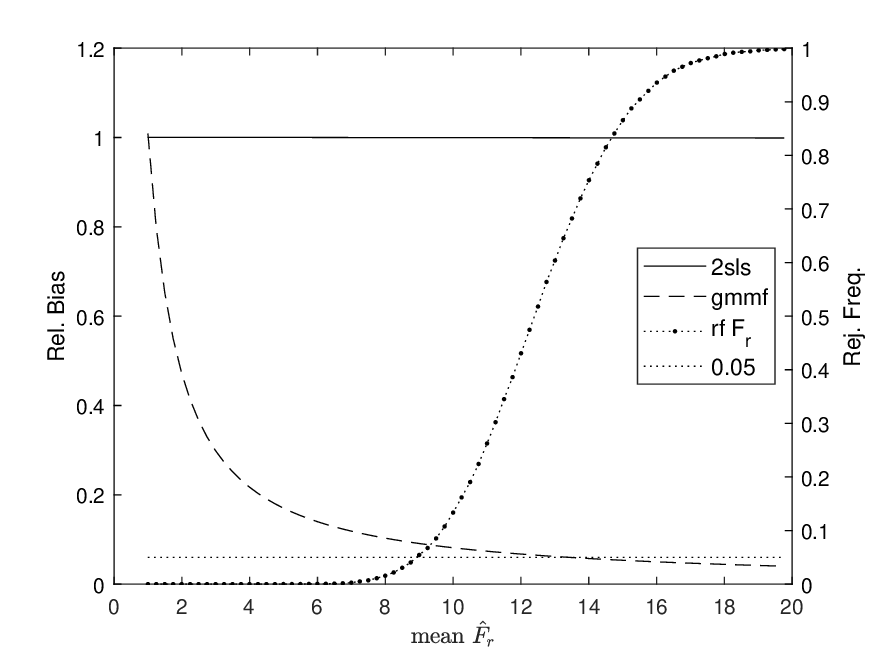}\includegraphics[scale=0.55]{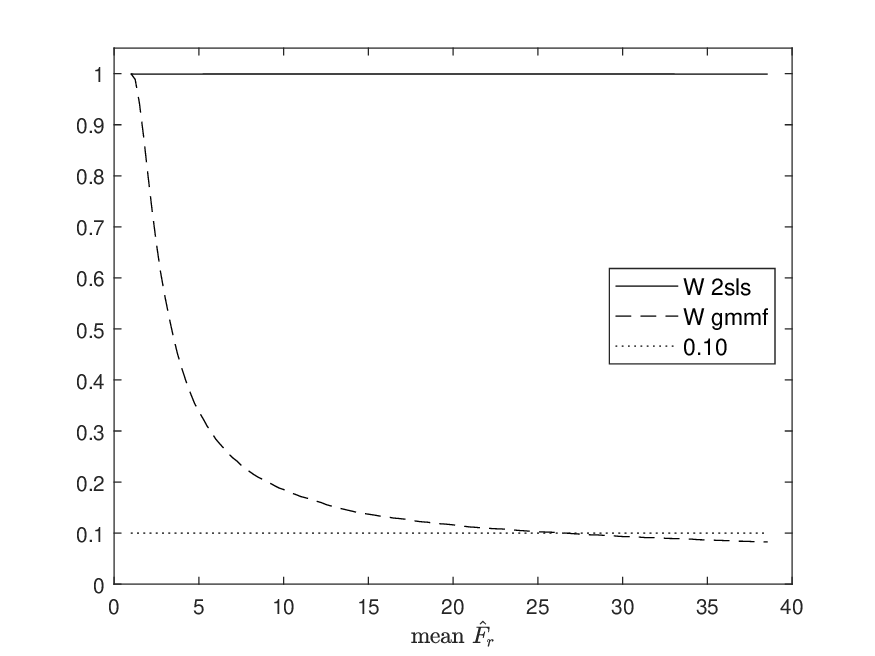}
\par\end{centering}
\caption{\label{fig:rejfrelb} Heteroskedastic design. Top: Moderate Endogeneity.
Bottom: High Endogeneity. Left: Bias of 2SLS and GMMf estimators relative
to OLS bias, and $\hat{F}_{r}$-based weak-instrument test rejection
frequencies, $\alpha=0.05$, $\tau=0.10$, least-squares benchmark
bias. The ``0.05'' line refers to the rej. freq. Right: Rejection
frequencies of robust Wald tests, $\alpha=0.05$.}
\end{figure}

The left panels of Figure \ref{fig:rejfrelb} displays the relative
bias of the 2SLS and GMMf estimators, relative to that of the OLS
estimator, as a function of the mean values of the robust F-statistic
$\hat{F}_{r}$, together with the rejection frequency of the $\hat{F}_{r}$-based
test for weak instruments, using the critical values from the least-squares
benchmark bias. We present the relative bias here to be in line with
the homoskedastic case as presented below. Different values of $\hat{F}_{r}$
are obtained by different values of the scalar $e$ when setting the
first-stage parameters $\pi=e\pi_{0}$. The relative bias of the GMMf
estimator decreases quite rapidly with increasing values of $\hat{F}_{r}$.
For the moderate endogeneity case, the test has a rejection frequency
of $5\%$ at a mean $\hat{F}_{r}$ of $10.03$, with the relative
bias of the GMMf estimator at that point equal to $0.092$. As shown
in the top right-hand panel of Figure \ref{fig:rejfrelb}, the GMMf
estimator based Wald test is well behaved in terms of size, with hardly
any size distortion for mean values of $\hat{F}_{r}$ larger than
5. The GMMf relative bias picture for the high-endogeneity case is
very similar to that of the moderate-endogeneity case. Here the $\hat{F}_{r}$
based test for weak instruments has a rejection frequency of $5\%$
at a mean $\hat{F}_{r}$ of $8.98$, with the relative bias there
being $0.091$. As for the homoskedastic case, where the Wald test
size deviation from nominal size is larger for larger values of $\rho_{uv_{2}}^{2}$,
the GMMf Wald test has a worse size performance in the high-endogeneity
design, and has a $10\%$ rejection frequency at a mean $\hat{F}_{r}$
of $26.64$. This would imply a critical value at the $5\%$ level
of around $32$, which compares to the Stock and Yogo weak-instruments
critical value of $38.54$ for a Wald test size of $10\%$ at the
$5\%$ nominal level.

\subsubsection{Homoskedastic Design}

We next consider the homoskedastic design for the moderate endogeneity
case with $\Sigma_{uv_{2}}=\frac{1}{G}\sum_{g=1}^{G}\Sigma_{uv_{2},g}$,
resulting in
\[
\Sigma_{uv_{2}}=\left[\begin{array}{cc}
2.57 & -1.50\\
-1.50 & 2.46
\end{array}\right],
\]
with $\rho_{uv_{2}}=-0.59$, as above. We consider smaller sample
sizes of $n=250$ and $n=500$, or group sizes of $25$ or $50$ on
average, to compare the weak-instrument finite sample behaviour of
the GMMf estimator to that of the 2SLS estimator. In particular, the
noise induced by estimation of $W_{2}^{-1}$ may adversely affect
the GMMf estimator.

The results in Figure \ref{fig:Hom} shows that for this design and
sample sizes the relative biases and Wald rejection frequencies are
virtually identical for the two estimators, with the standard deviations
of the GMMf estimates slightly larger than those of the 2SLS estimator,
as expected. The rejection frequencies of the $\hat{F}_{r}$-based
test are here closer to those of the standard Stock and Yogo $\hat{F}$-based
test compared to the rejection frequencies of the $\hat{F}_{\text{eff}}$-based
test, with the latter test more conservative.

\begin{figure}[t]
\begin{centering}
\includegraphics[scale=0.55]{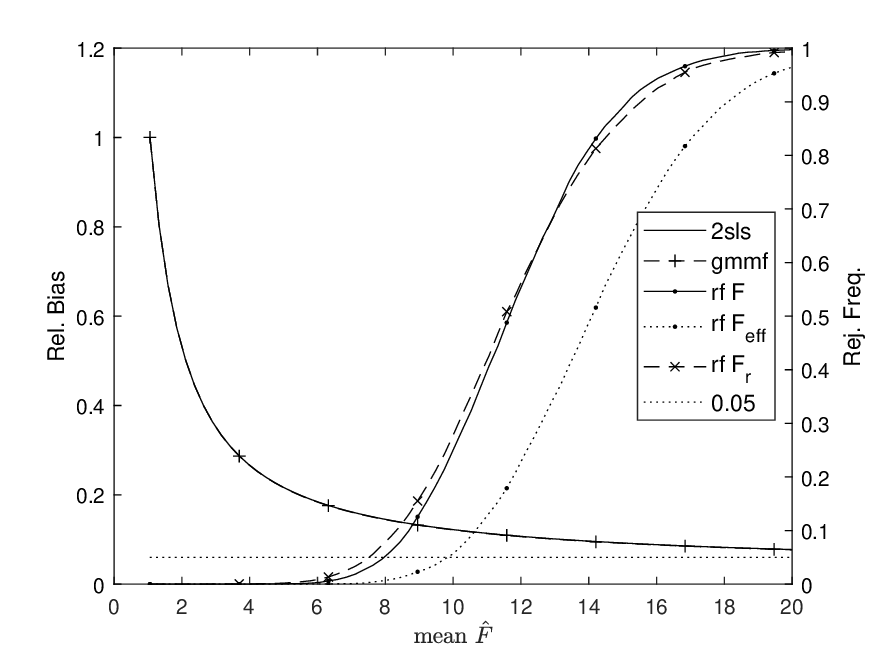}\includegraphics[scale=0.55]{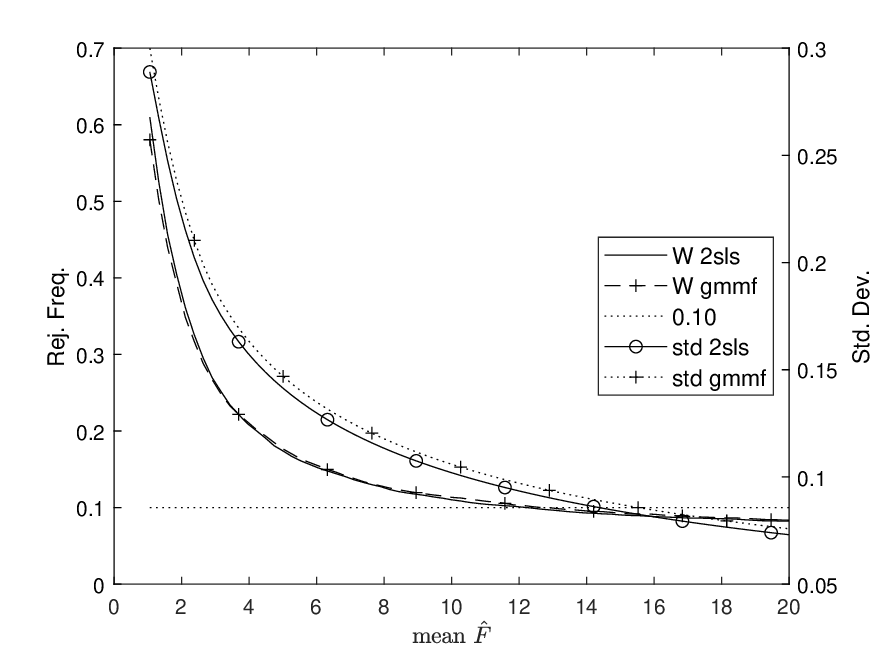}
\par\end{centering}
\begin{centering}
5\includegraphics[scale=0.55]{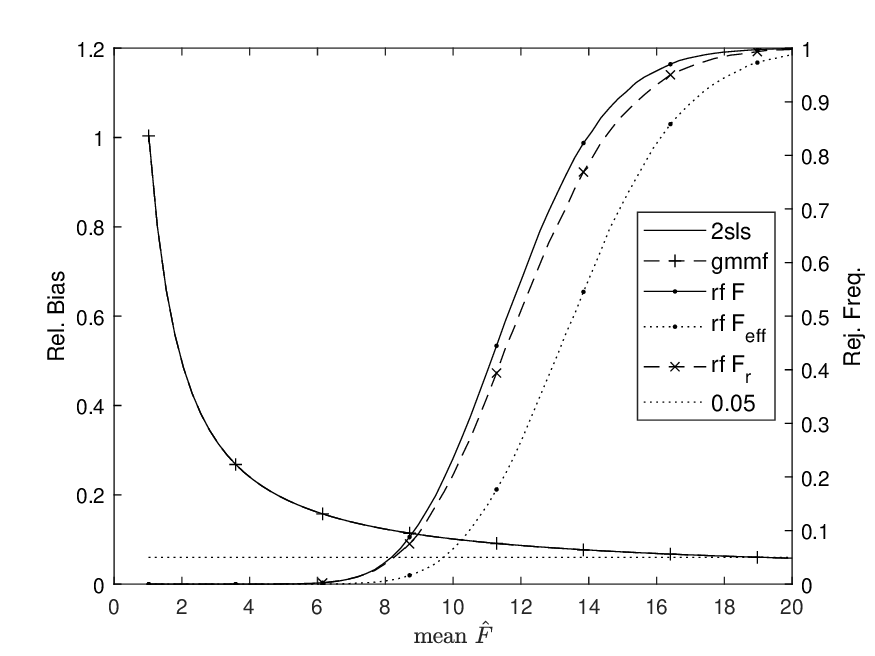}\includegraphics[scale=0.55]{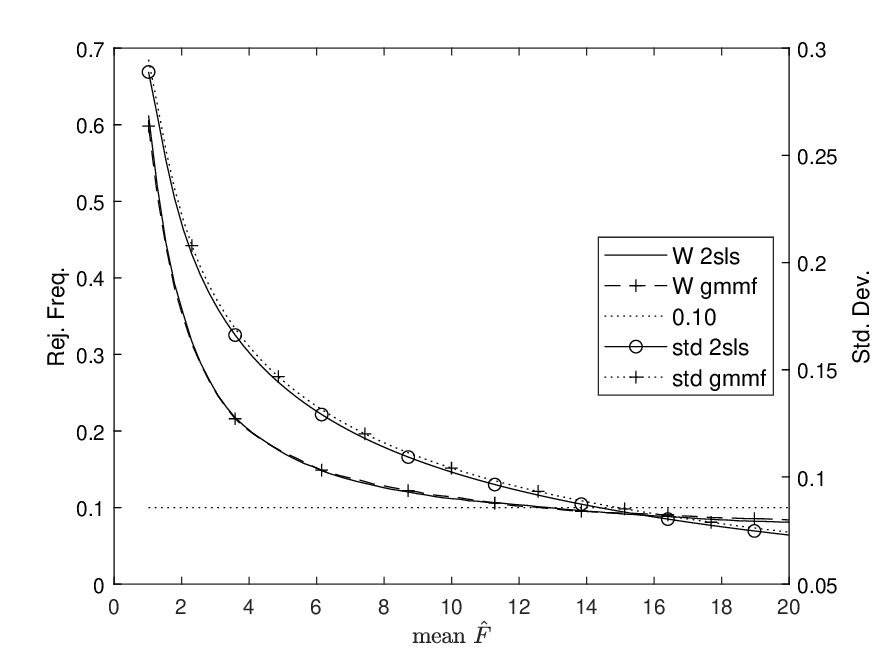}
\par\end{centering}
\caption{\label{fig:Hom} Homoskedastic design, moderate endogeneity. Top:
$n=250$. Bottom: $n=500$. Left: Relative bias and weak-instrument
tests rejection frequencies. Right: Wald test rejection frequencies
and standard deviations of 2SLS and GMMf estimates. }
\end{figure}

\subsection{Relative Values of $\hat{F}_{\text{eff}}$ and $\hat{F}_{r}$ and
Biases of 2SLS and GMMf Estimators}

\label{subsec:Relative-Values}

From the results on the relative magnitude of the concentration parameters
as discussed in Section \ref{sec:RobF} and footnote \ref{fn:concpar},
we can change the parameter values in the grouped-data design of Section
\ref{subsec:MC} such that $\hat{F}_{\text{eff}}>\hat{F}_{r}$ in
expectation. The design is described in Appendix \ref{subsec:Fefflarge}
where Table \ref{tab:testmodlarge} presents Monte Carlo results for
the weak-instruments tests and estimation results. The null of weak
instruments is rejected in all replications for the 2SLS estimator,
whereas it is not rejected in virtually all replications for the GMMf
estimator and the bias of the 2SLS estimator is smaller than that
of the GMMf estimator.

As per the results of Tables \ref{tab:testmod} and \ref{tab:resmod},
a large value of $\hat{F}_{r}\gg cv_{r}^{LS}$ indicates that the
GMMf estimator is quite well behaved in terms of bias. But when that
is the case, a value of $\hat{F}_{\text{eff}}<cv_{\text{eff}}^{LS}$,
and so not rejecting the null of weak instruments for the 2SLS estimator,
does not necessarily imply that the 2SLS bias is larger than the GMMf
bias (and vice versa). Consider an example with fixed group sizes
$\frac{n_{g}}{n}=f_{g}$, $0<f_{g}<1$, for $g=1,\ldots,G$. Then
for 2SLS, 
\[
\mu_{2sls}^{2}\coloneqq\mu_{Q_{zz}^{-1}}^{2}=\frac{\sum_{g=1}^{G}c_{g}^{2}f_{g}}{\sum_{g=1}^{G}\sigma_{v_{2},g}^{2}},
\]
and for GMMf, 
\[
\mu_{gmmf}^{2}\coloneqq\mu_{W_{2}^{-1}}^{2}=\frac{1}{G}\sum_{g=1}^{G}c_{g}^{2}f_{g}/\sigma_{v_{2},g}^{2}.
\]
Then the Nagar bias (\ref{eq:NagBias}) for the 2SLS and GMMf estimators
are respectively given by
\[
N_{2sls}=\frac{\sum_{g=1}^{G}\left(1-2\frac{c_{g}^{2}f_{g}}{\sum_{s=1}^{G}c_{s}^{2}f_{s}}\right)\sigma_{uv_{2},g}}{\sum_{g=1}^{G}c_{g}^{2}f_{g}}
\]
\[
N_{gmmf}=\frac{\sum_{g=1}^{G}\left(1-2\frac{c_{g}^{2}f_{g}/\sigma_{v_{2},g}^{2}}{\sum_{s=1}^{G}c_{s}^{2}f_{s}/\sigma_{v_{2},s}^{2}}\right)\frac{\sigma_{uv_{2},g}}{\sigma_{v_{2},g}^{2}}}{\sum_{g=1}^{G}c_{g}^{2}f_{g}/\sigma_{v_{2},g}^{2}}.
\]
One can find therefore parameter values in this design, such that
$N_{2sls}<N_{gmmf}$ whereas $\mu_{2sls}<\mu_{gmmf}$. It follows
for example, that if $c_{g}^{2}f_{g}=c_{f}$, $c_{f}>0$ for $g=1,\ldots,G$,
then $N_{2sls}=0$ if $\sum_{g=1}^{G}\sigma_{uv_{2},g}=0$, irrespective
of the value of $\mu_{2sls}$. In that case, there is no overall endogeneity,
as $\rho_{uv_{2}}=0$. (Conversely, in the constant $c_{g}^{2}f_{g}$
case, $N_{gmmf}=0$ if $\sum_{g=1}^{G}\frac{\sigma_{uv_{2},g}}{\sigma_{v_{2},g}^{4}}=0$
irrespective of the value of $\mu_{gmmf}$). 

As another example, consider again $G=10$, with $f_{g}=0.1$ for
$g=1,\ldots,10$, with parameter values $c_{g}$ and group variance
matrices $\Sigma_{uv_{2},g}$ given in Appendix \ref{subsec:Parameter-Values}.
The resulting overall endogeneity is a moderate $\rho_{uv_{2}}=0.244$.
We get a value of $\mu_{gmmf}^{2}=43.09$ and a smaller value of $\mu_{2sls}^{2}=8.45$,
whereas the Nagar bias approximations are virtually identical at $N_{2sls}=0.022<N_{gmmf}=0.023$.
Table \ref{tab:zero2slstest} in Appendix \ref{subsec:Parameter-Values}
presents some Monte Carlo results for this design, for $n=10,000$,
confirming the theoretical results. We find there that the null of
weak instruments is not rejected in virtually all replications for
the 2SLS estimator, whereas it is rejected in all replications for
the GMMf estimator. But the Monte Carlo bias of the 2SLS estimator,
$0.022$, is slightly smaller than that of the of the GMMf estimator,
$0.024$, as indicated by the Nagar bias approximations.

From these latter examples it is clear that a situation with $\hat{F}_{r}>cv_{r}^{LS}$
and $\hat{F}_{\text{eff}}<cv_{\text{eff}}^{LS}$ does not necessarily
imply that the 2SLS estimator is more biased than the GMMf estimator
(and vice versa). However, randomly sampling values $c_{g}$ and group
variance matrices $\Sigma_{uv_{2},g}$, for $g=1,\ldots,10$, and
collecting the first $1000$ of those sets where $\left|\rho_{uv_{2}}\right|>0.2$,
$5<\mu_{2sls}<10$ and $40<\mu_{gmmf}<45$, we find for $989$, or
$99\%$ of those, the Nagar bias of the 2SLS estimator to be larger
than that of the GMMf estimator. 

\section{Considerations for Practice}

\label{sec:Practice}

The \citet{IAndrewsREStat2018} grouped-data IV designs are quite
extreme in the variation of $\sigma_{v_{2},g}^{2}$, leading to the
large differences between the values of $\hat{F}_{\text{eff}}$ and
$\hat{F}_{r}$ and between the performances of the 2SLS and GMMf estimators.
Note that these results carry over to a model with a constant and
a full set of mutually exclusive binary indicators as instruments,
when the variances $\sigma_{v_{2},g}^{2}$ for at least two groups
are relatively small and their $\pi_{g}$ coefficients are different.
This is the case if we for example change $\sigma_{v_{2},2}^{2}$
in the moderate endogeneity design above to be equal to the small
$\sigma_{v_{2},1}^{2}$. An example where this could be relevant is
the judge fixed effects design as pioneered by \citet{Kling2006}.
There are many papers using judge effects as instrumental variables,
Table 1 in \citet{Frandsen2023} presents a subset of these. \citet{Stevenson2018}
studied the effect of pretrial detention on conviction, using judge
indicators as instruments, as cases are randomly assigned to judges.
As the treatment is here binary, with variance $\pi\left(1-\pi\right)$,
a very lenient (small $\pi$) and a very strict judge (large $\pi$)
in terms of sending defendants to pretrial detention have small values
of $\sigma_{v_{2}}^{2}$, but clearly different values of $\pi$.
Unlike the 2SLS estimator, the GMMf estimator takes the differential
strengths of the instruments due to the different values of $\sigma_{v_{2}}^{2}$
into account, giving more weight to very lenient and very strict judges.\footnote{I would like to thank an anonymous referee for this example.}

For any single-endogenous variable application, most packages will
compute the robust F-statistic for estimation results with robust
standard errors. For example, in Stata, \citet{Stata}, the robust
first-stage F-statistic is provided with the output of ``ivregress''
or ``ivreg2'', \citet*{ivreg2}, whereas ``weakivtest'', \citet{weakivtest},
calculates the effective F-statistic and critical values for the weak-instruments
test. An extended version of the latter, called ``gfweakivtest''\footnote{Available from https://github.com/stat0357/gfweakivtest.}
also calculates the robust F-statistic and its weak-instruments critical
values. It further includes the critical values based on the least-squares
benchmark bias for both $\hat{F}_{r}$ and $\hat{F}_{\text{eff}}$,
and presents the estimation results for the GMMf estimator. As per
the advice of \citet{AndrewsStockSun2019}, 2SLS estimation results
with robust standard errors should be accompanied by the effective
F-statistic and its critical value. If the situation is as in the
\citet{IAndrewsREStat2018} examples above, then the GMMf estimator
could be the preferred estimator. 

\citet{Windmeijer2023} presents such comparisons for a study with
a set of mutually exclusive binary indicators as instruments, and
one of the \textit{American Economic Review} studies as considered
in the review paper by \citet{AndrewsStockSun2019}. This is from
\citet{Stephens2014} who study the effect of schooling on wages,
using data from the 1960-1980 US Censuses of Population. The endogenous
variable is years of schooling for individual $i$, born in state
$s$ in year $t$, and the instruments are three indicator variables
$RS7$, $RS8$ and $RS9$, corresponding to being required to attend
seven, eight or nine or more years of schooling, respectively. All
specifications include state-of-birth and year-of-birth fixed effects,
and the computed standard errors are robust to heteroskedasticity
and clustering at the state-of-birth/year-of-birth cell. \citet{Stephens2014}
report the robust first-stage F-statistics $\hat{F}_{r}$ in their
Table 1, which presents eight sets of estimates of the returns of
schooling on log weekly wages for four different samples and two different
model specifications for each sample. None of the $\hat{F}_{r}$ statistics
indicate an underidentification problem. But no effective F-statistics
were reported. As the estimator used is the 2SLS estimator, it is
therefore important to consider whether the $\hat{F}_{r}$ statistic
misrepresents weak-instruments bias of the 2SLS estimator, in the
sense that a large value of $\hat{F}_{r}$ may not be an indicator
of a good performance of the 2SLS estimator. Table 3 in \citet{Windmeijer2023}
shows that this is not the case here. The $\hat{F}_{\text{eff}}$
and $\hat{F}_{r}$ based tests for weak instruments both reject and
don't reject the null of weak instruments for the same specifications,
their values are similar in magnitude and the 2SLS and GMMf estimation
results are virtually identical when the null of weak instruments
is rejected and the Hansen $J$-test does not indicate misspecification.
This is a reassuring result for the 2SLS estimates that were accompanied
by larger values of $\hat{F}_{r}$, with the only two cases where
the null of weak instruments was not rejected had values of $\hat{F}_{r}$
equal to 8.22 and 6.34 with those of $\hat{F}_{\text{eff}}$ equal
to 8.11 and 6.13 respectively.

\section{Concluding Remarks}

\label{sec:Concluding-Remarks}

For models with a single endogenous explanatory variable, we have
introduced a class of generalized effective F-statistics as defined
in (\ref{eq:geffF}) in relation to a class of linear GMM estimators
given in (\ref{eq:bgmm}) and have shown that the \citet{OleaPflueger2013}
weak-instruments testing procedure that they established for the effective
F-statistic in relation to the Nagar bias of the 2SLS estimator applies
to this extended class. In particular, the standard robust F-statistic
is a member of this class and is associated with the behaviour in
terms of Nagar bias of the GMMf estimator, which has its weight matrix
based on the first-stage residuals. We then focused on a comparison
of the effective F-statistic and the robust F-statistic and the associated
weak-instrument behaviours of the 2SLS and GMMf estimators. In particular,
we have shown that and explained why the GMMf estimator's performance
is much better in terms of bias than that of the 2SLS estimator in
the grouped-data designs of \citet{IAndrewsREStat2018}, where the
robust F-statistic can take very large values, but the effective F-statistic
is very small. One should therefore in general not use the robust
F-statistic to gauge instrument strength in relation to the performance
of the 2SLS estimator, \citet[pp 738-739]{AndrewsStockSun2019}, but
as shown here, it can be used as a weak-instruments test in relation
to the Nagar bias of the GMMf estimator. In practice, therefore, both
the effective F-statistic and robust F-statistic should be reported,
together with their critical values, and the GMMf estimator could
be considered in cases where there is a clear discrepancy with a large
value for the robust F-statistic rejecting the null of weak instruments,
and when the effective F-statistic is small and does not reject its
null of weak instruments.

We have not focused here on the wider applicability of the class of
generalized effective F-statistics and their associated GMM estimators,
but an example is the one-step \citet{ABond1991} GMM estimator for
panel data models with a single endogenous variable. Two-step estimators
do not fall in the class because of the presence of estimated structural
parameters in the weight matrix, but one could test for weak instruments
in this setting, fixing the parameter of the endogenous variable in
the weight matrix, for example under a specific null value of interest.

A topic for future research for the general heteroskedasticity setting
is an extension to the linear model with more than one endogenous
variable. \citet{LewisMertens2022} is an extension of the \citet{OleaPflueger2013}
method to the multiple endogenous variable case for the 2SLS estimator,
but they do not consider such an extension for the wider class of
GMM estimators. Future research should also address the weak-instruments
Wald size properties for both the single and multiple endogenous variables
settings.

\section*{Acknowledgments}

I would like to thank the editor Xiaohong Chen, an associate editor
and two anonymous referees for their detailed comments and suggestions
that helped to improve the paper substantially. Further thanks to
Benedikt Koch and seminar participants for helpful comments.

\section*{Appendix}

\global\long\def\thesection{A}%
\global\long\def\theequation{A.\arabic{equation}}%
\global\long\def\thetable{A\arabic{table}}%
\setcounter{table}{0}\setcounter{equation}{0}

\subsection{A Grouped-Data Design with $\hat{F}_{\text{eff}}>\hat{F}_{r}$}

\label{subsec:Fefflarge}

From the results on the relative magnitude of the concentration parameters
as discussed in Section \ref{sec:RobF} and footnote \ref{fn:concpar},
we can change the parameter values such that $\hat{F}_{\text{eff}}>\hat{F}_{r}$
in expectation in the grouped-data design. Here we take the moderate-endogeneity
design of \citet{IAndrewsREStat2018}, but change the value of $\sigma_{v_{2},1}^{2}$
from $0.004$ to $20$, adjusting $\sigma_{u,1}^{2}$ and $\sigma_{uv_{2},1}$
accordingly, and change the value of $\pi_{1}$ from $0.058$ to $1.414$.
This results in the expected group specific concentration parameter
$\mu_{n,1}^{2}=100$, and so the first group is again the informative
group, but has now a relatively large variance $\sigma_{v_{2}}^{2}$
and a relatively large signal $\pi_{1}^{2}$. The estimation results
are given in Table \ref{tab:testmodlarge}. $\hat{F}_{\text{eff}}$
indicates that there is no weak-instruments problem for 2SLS, whereas
$\hat{F}_{r}$ shows that there is for the GMMf estimator. The 2SLS
estimator is virtually unbiased and the 2SLS Wald test is well behaved,
whereas the GMMf estimator displays bias with the its Wald test overrejecting.
In this case, the 2SLS estimator gives a large weight of $0.98$ to
group 1, whereas this weight is $0.89$ for GMMf.

\begin{table}[th]
\caption{\label{tab:testmodlarge} Weak-instruments tests and estimation results,
$\sigma_{v_{2},1}^{2}=20$, $\pi_{1}=1.414$}

\begin{centering}
\begin{tabular}{ccccccc}
\hline 
$\hat{F}$ & $\hat{F}_{\text{eff}}$ & $cv_{\text{eff}}^{LS}$ & RF & $\hat{F}_{r}$ & $cv_{r}^{LS}$ & RF\tabularnewline
\hline 
46.10 & 46.12 & 18.55 & 1 & 11.42 & 16.03 & 0.025\tabularnewline
{\small (9.11)} & {\small (9.17)} & {\small (0.116)} &  & {\small (2.16)} & {\small (0.191)} & \tabularnewline
\hline 
$\widehat{\beta}_{ols}$ & $\widehat{\beta}_{2sls}$ & $\widehat{\beta}_{gmmf}$ & $Wald_{2sls}$ & $Wald_{gmmf}$ &  & \tabularnewline
-0.316 & -0.007 & -0.065 & 0.056 & 0.217 &  & \tabularnewline
{\small (0.011)} & {\small (0.024)} & {\small (0.061)} &  &  &  & \tabularnewline
\hline 
\end{tabular}
\par\end{centering}
\centering{}{\small Notes: See notes to Tables \ref{tab:testmod} and
\ref{tab:resmod}. Moderate Endogeneity.}{\small\par}
\end{table}

\subsection{Parameter Values for Design of Section \ref{subsec:Relative-Values}}

\label{subsec:Parameter-Values}

The parameter values $c_{g}$ and $\Sigma_{uv_{2}}=\left[\begin{array}{cc}
\sigma_{u,g}^{2} & \sigma_{uv_{2},g}\\
\sigma_{uv_{2},g} & \sigma_{v_{2},g}^{2}
\end{array}\right]$ are given by
\begin{table}[H]
\caption{Parameter values.}

\begin{centering}
\begin{tabular}{crccc}
\hline 
$g$ & $c_{g}$ & $\sigma_{u,g}^{2}$ & $\sigma_{uv_{2},g}$ & $\sigma_{v_{2},g}^{2}$\tabularnewline
\hline 
1 & 20.6393 & 9.0052 & 1.7135 & 4.2487\tabularnewline
2 & 27.6284 & 3.4060 & 1.7847 & 9.9668\tabularnewline
3 & -3.3019 & 2.3741 & 2.8222 & 6.0015\tabularnewline
4 & -38.7569 & 1.7522 & -0.7409 & 0.4370\tabularnewline
5 & -11.1463 & 3.5420 & -2.4995 & 8.6788\tabularnewline
6 & 18.2092 & 3.2771 & 3.0059 & 4.0456\tabularnewline
7 & -0.4646 & 0.0538 & 0.3084 & 6.9979\tabularnewline
8 & 25.0219 & 6.2319 & 4.8593 & 8.2675\tabularnewline
9 & -25.6606 & 5.8019 & -0.4336 & 4.2698\tabularnewline
10 & 5.9592 & 7.3973 & 0.8086 & 0.0968\tabularnewline
\hline 
\end{tabular}
\par\end{centering}
\end{table}

Table \ref{tab:zero2slstest} presents some Monte Carlo results for
this design, for $n=10,000$, confirming the theoretical results.
We find here that the null of weak instruments is not rejected in
virtually all replications for the 2SLS estimator, whereas it is rejected
in all replications for the GMMf estimator. But the Monte Carlo bias
of the 2SLS estimator, $0.022$, is slightly smaller than that of
the of the GMMf estimator, $0.024$, as indicated by the Nagar bias
approximations as detailed in Section \ref{subsec:Relative-Values}.

\begin{table}[th]
\caption{\label{tab:zero2slstest} Weak-instruments tests and estimation results}

\begin{centering}
\begin{tabular}{cccccc}
\hline 
$\hat{F}_{\text{eff}}$ & $cv_{\text{eff}}^{LS}$ & RF & $\hat{F}_{r}$ & $cv_{r}^{LS}$ & RF\tabularnewline
\hline 
9.49 & 15.85 & 0.00 & 44.24 & 19.47 & 1\tabularnewline
{\small (1.83)} & {\small (0.10)} &  & {\small (4.45)} & {\small (0.15)} & \tabularnewline
\hline 
$\widehat{\beta}_{ols}$ & $\widehat{\beta}_{2sls}$ & $\widehat{\beta}_{gmmf}$ & $Wald_{2sls}$ & $Wald_{gmmf}$ & \tabularnewline
\hline 
0.218 & 0.022 & 0.024 & 0.062 & 0.049 & \tabularnewline
{\small (0.009)} & {\small (0.092)} & {\small (0.149)} &  &  & \tabularnewline
\hline 
\end{tabular}
\par\end{centering}
\centering{}{\small Notes: See notes to Tables \ref{tab:testmod} and
\ref{tab:resmod}. Design as in text and Appendix.}{\small\par}
\end{table}

\subsection{Proof of Lemma 1}

\label{subsec:Proof-of-Lemma 1}

It follows from the first-stage and reduced-form model specifications
(\ref{eq:fstage}) and (\ref{eq:reduced}) and Assumptions \ref{Ass:all}
and \ref{Ass:Om} that, as $n\rightarrow\infty$,
\begin{align}
\frac{1}{\sqrt{n}}\left(\begin{array}{c}
\Omega_{n}^{1/2}Z'y\\
\Omega_{n}^{1/2}Z'x
\end{array}\right) & =\left(\begin{array}{c}
\Omega_{n}^{1/2}\left(\frac{1}{n}Z'Z\right)c\beta+\frac{1}{\sqrt{n}}\Omega_{n}^{1/2}Z'v_{1}\\
\Omega_{n}^{1/2}\left(\frac{1}{n}Z'Z\right)c+\frac{1}{\sqrt{n}}\Omega_{n}^{1/2}Z'v_{2}
\end{array}\right)\nonumber \\
 & \overset{d}{\rightarrow}\left(\begin{array}{c}
\gamma_{\Omega,1}\\
\gamma_{\Omega,2}
\end{array}\right):=\left(\begin{array}{c}
c_{\Omega}\beta+\Omega^{1/2}\psi_{1}\\
c_{\Omega}+\Omega^{1/2}\psi_{2}
\end{array}\right)\sim N\left(\left(\begin{array}{c}
c_{\Omega}\beta\\
c_{\Omega}
\end{array}\right),W_{\Omega}\right),\label{eq:gamom}
\end{align}
where $c_{\Omega}=\Omega^{1/2}Q_{zz}c$.

As
\[
\hat{\beta}_{\Omega_{n}}=\frac{x'Z\Omega_{n}Z'y}{x'Z\Omega_{n}Z'x},
\]
it follows that 
\[
\hat{\beta}_{\Omega_{n}}-\beta=\frac{x'Z\Omega_{n}Z'u}{x'Z\Omega_{n}Z'x}=\frac{x'Z\Omega_{n}Z'\left(v_{1}-\beta v_{2}\right)}{x'Z\Omega_{n}Z'x}.
\]
Under Assumptions \ref{Ass:all} and \ref{Ass:Om} it follows from
(\ref{eq:gamom}) and the continuous mapping theorem that
\[
\hat{\beta}_{\Omega_{n}}-\beta\overset{d}{\rightarrow}\beta_{\Omega}^{*}=\left(\gamma_{\Omega,2}^{\prime}\gamma_{\Omega,2}\right)^{-1}\gamma_{\Omega,2}^{\prime}\left(\gamma_{\Omega,1}-\beta\gamma_{\Omega,2}\right).
\]

As
\[
\hat{F}_{\text{geff}}\left(\Omega_{n}\right)=\frac{x'Z\Omega_{n}Z'x}{n\text{tr}\left(\hat{W}_{2}\Omega_{n}\right)}
\]
it follows Assumptions \ref{Ass:all} and \ref{Ass:Om} that
\[
\hat{F}_{\text{geff}}\left(\Omega_{n}\right)\overset{d}{\rightarrow}\frac{\gamma_{\Omega,2}^{\prime}\gamma_{\Omega,2}}{\text{tr}\left(W_{2}\Omega\right)}
\]
again from (\ref{eq:gamom}) and the continuous mapping theorem.

\subsection{Proof of Nagar Bias Result of Theorem 1}

\label{subsec:Proof-of-Nagar Bias}

The Nagar bias result is obtained as follows. Let $S\left(\beta,W_{\Omega}\right)$
be as defined in (\ref{eq:SbW}), and
\begin{align*}
\xi & =S_{1}^{-1/2}\left(\beta,W_{\Omega}\right)\left(\gamma_{\Omega,1}-\beta\gamma_{\Omega,2}\right)\sim\mathcal{N}\left(0,1\right),\\
\nu & =W_{\Omega,2}^{-1/2}\left(\gamma_{\Omega,2}-c_{\Omega}\right)\sim\mathcal{N}\left(0,1\right).
\end{align*}
 We then have that
\begin{align*}
\beta_{\Omega}^{*} & =\frac{\gamma_{\Omega,2}^{\prime}\left(\gamma_{\Omega,1}-\beta\gamma_{\Omega,2}\right)}{\gamma_{\Omega,2}^{\prime}\gamma_{\Omega,2}}\\
 & =\frac{c_{\Omega}^{\prime}S_{1}^{1/2}\left(\beta,W_{\Omega}\right)\xi+\nu'W_{\Omega,2}^{1/2}S_{1}^{1/2}\left(\beta,W_{\Omega}\right)\xi}{c_{\Omega}^{\prime}c_{\Omega}+2c_{\Omega}'W_{\Omega,2}^{1/2}S_{1}^{1/2}\nu+\nu'W_{\Omega,2}\nu}.
\end{align*}
It follows that
\[
\Vert c_{\Omega}\Vert\beta_{\Omega}^{*}=\frac{c_{\Omega,0}^{\prime}S_{1}^{1/2}\left(\beta,W_{\Omega}\right)\xi+\frac{\nu'W_{\Omega,2}^{1/2}S_{1}^{1/2}\left(\beta,W_{\Omega}\right)\xi}{\sqrt{\text{tr}\left(W_{\Omega,2}\right)}}\mu_{\Omega}^{-1}}{1+\frac{2c_{\Omega,0}'W_{\Omega,2}^{1/2}S_{1}^{1/2}\nu}{\sqrt{\text{tr}\left(W_{\Omega,2}\right)}}\mu_{\Omega}^{-1}+\frac{\nu'W_{\Omega,2}\nu}{\text{tr}\left(W_{\Omega,2}\right)}\mu_{\Omega}^{-2}},
\]
where $\mu_{\Omega}^{2}=\Vert c_{\Omega}\Vert^{2}/\text{tr}\left(W_{\Omega,2}\right)$.
Then from \citet[(6.2)]{Rothenberg1984}, we get the second-order
Edgeworth, \citet{Nagar1959} approximation
\begin{align*}
\mathbb{E}\left(\beta_{\Omega}^{*}\right) & \approx\frac{1}{\mu_{\Omega}^{2}}\frac{1}{\text{tr}\left(W_{\Omega,2}\right)}\mathbb{E}\left[\nu'W_{\Omega,2}^{1/2}S_{1}^{1/2}\left(\beta,W_{\Omega}\right)\xi-2c_{\Omega,0}^{\prime}S_{1}^{1/2}\left(\beta,W_{\Omega}\right)\xi c_{\Omega,0}'W_{\Omega,2}^{1/2}\nu\right]\\
 & =\frac{1}{\mu_{\Omega}^{2}}\left(\frac{\text{tr}\left(S_{12}\left(\beta,W_{\Omega}\right)-2c_{\Omega,0}'S_{12}\left(\beta,W_{\Omega}\right)c_{\Omega,0}\right)}{\text{tr}\left(W_{\Omega,2}\right)}\right).
\end{align*}

\bibliographystyle{ecta}
\bibliography{RobF}

\end{document}